\numberwithin{equation}{section}
\newtheorem{thm}{Theorem}[section]
\newtheorem{lem}[thm]{Lemma}
\newtheorem{remark}[thm]{Remark}
\newtheorem{prop}[thm]{Proposition}
\newtheorem{assumption}[thm]{Assumption}
\begin{document}
\title{Duality Theory for Exponential Utility-Based Hedging in the Almgren-Chriss Modell}
  \thanks{Supported in part by the GIF Grant 1489-304.6/2019 and the ISF grant 230/21}

   \author{Yan Dolinsky \address{
 Department of Statistics, Hebrew University of Jerusalem.\\
 e.mail: yan.dolinsky@mail.huji.ac.il}}
  \date{\today}
\maketitle
\begin{abstract}
In this paper, we obtain a
duality result for the
exponential utility maximization problem where trading is subject to quadratic transaction costs
and the investor is required to liquidate her position at the maturity date.
As an application of the duality, we treat
utility-based hedging in the
Bachelier model. For European contingent claims with a quadratic payoff, we compute explicitly
the optimal trading strategy.
\end{abstract}
\begin{description}
\item[Mathematical Subject Classification (2020)] 91B16, 91G10, 60H30
\item[Keywords] exponential utility, duality, linear price impact, Bachelier model
\end{description}

\keywords{}
 \maketitle \markboth{}{}
\renewcommand{\theequation}{\arabic{section}.\arabic{equation}}
\pagenumbering{arabic}

\section{Introduction}\label{sec:1}\setcounter{equation}{0}
In financial markets, trading moves prices against the trader:
buying faster increases execution prices, and selling faster decreases them. This aspect
of liquidity is known as market depth
\cite{B} or price-impact.
In this paper we consider
the problem of optimal liquidation
for the exponential utility function in
the Almgren--Chriss model \cite{AlmgrenChriss:01}
with linear temporary impact for the underlying asset.

This problem
goes back to  \cite{SST:09} where the authors considered a market model given by a L\'{e}vy process.
The authors proved that the optimal trading strategy is deterministic and hence reduced the primal problem to
a deterministic variational problem that can be solved explicitly.
A similar phenomenon occurs in \cite{BV:2019} where the authors consider an optimal liquidation problem
with transient price impact in the Bachelier model. Namely, the fact that the utility function is
exponential and the risky asset has independent increments allows us to reduce the primal hedging problem to a deterministic control problem.

For the case where the market model is not given by a process with independent
increments, the exponential utility maximization problem
in the Almgren--Chriss model is much more complicated and
typically does not have an explicit solution.
In \cite{GS:11} the authors found a closed-form solution
for the optimal trade execution strategy in the Almgren–Chriss
framework assuming that the risky asset is given by the
Black--Scholes model.
However, the risk criterion the authors used was given
by the expected value of the terminal wealth and
hence, it is analytically simpler than the exponential utility maximization problem.
In general, although the current paper is focused on the
Almgren--Chriss model, there are other models for optimal liquidation problems.
For instance, one common approach is via limit order books (see \cite{BL:14,FSU:19}
and the references therein).

Our first result is Theorem \ref{thm2.1}
which provides a dual representation
for the optimal portfolio
and the corresponding value of the
exponential utility maximization problem.
Our duality result is obtained under quite general
assumptions on the market model.
As usual, for the case of exponential utility, by applying a change of measure
one can reduce the problem of utility-based hedging of
a European contingent claim to the standard utility maximization problem.
This brings us to our second result.

Our second result (Theorem \ref{thm3.1})
deals with explicit computations for the
case where the risky asset is given by a linear Brownian motion,
i.e. the Bachelier model.
We consider a European contingent claim with the payoff given
by $\kappa S^2_T$ where $\kappa>0$ is a constant
and $S_T$ is the
stock price at the maturity date.
We apply the Girsanov theorem and the It\^{o} Isometry in order to
derive a particularly convenient representation of the dual target functional
which leads to deterministic variational problems. These problems can be solved
explicitly and allow us both to construct the solution to the dual problem and to compute
the primal optimal strategy. We show that the
optimal strategy is given by a feedback form which we compute explicitly.
For the case $\kappa=0$, i.e. there is no option,
Theorem \ref{thm3.1} recovers the optimal portfolio found in
\cite{SST:09} for the Bachelier model.

The problem of utility-based hedging for the Almgren--Chriss model
in the Bachelier setup was studied recently in \cite{EN:2022}.
 The authors apply the Hamilton--Jacobi-Bellman (HJB)
methodology and obtain a representation of the value function
and the optimal strategy. Still, the authors do
not require the liquidation of the portfolio at the maturity date.
Moreover, the authors assume that the payoff function is globally Lipschitz.

A natural question that for now remains open is whether Theorem \ref{thm2.1}
can be applied beyond
the Bachelier model. In particular, it is not clear
whether by applying this duality result one can recover
the optimal portfolio from \cite{SST:09}
for a general L\'{e}vy process (beyond Brownian motion).

The rest of the paper is organized as follows. In Section \ref{sec:2}
we introduce the model and formulate a general duality result (Theorem \ref{thm2.1}).
In Section \ref{sec:3}
we consider the Bachelier model and we explicitly solve the problem of utility-based hedging for European contingent claims with a quadratic payoff (Theorem \ref{thm3.1}).
In Section \ref{sec:4} we derive an
auxiliary result from the field of deterministic variational analysis.  

\section{Preliminaries and the Duality Result}\label{sec:2}
Let
$(\Omega,\mathcal F,(\mathcal F_t)_{t\in [0,T]},\mathbb P)$ be a filtered probability space
equipped with the completed and right--continuous filtration $(\mathcal F_t)_{t\in [0,T]}$ and without loss of generality we assume
that
$\mathcal F=\mathcal F_T$. We do not make any assumptions on $\mathcal F_0$.
Consider a simple financial market
with a riskless savings account bearing zero interest (for simplicity) and with a
RCLL (right continuous with left
limits) risky asset $S=(S_t)_{t\in [0,T]}$
which is adapted to the filtration
$(\mathcal F_t)_{t\in [0,T]}$. We assume the following growth condition.
\begin{assumption}\label{asm3.1}
There is $a>0$ such that
$\mathbb E_{\mathbb P}\left[\exp\left(a\sup_{0 \leq t\leq T} S^2_t\right)\right]<\infty.$
\end{assumption}

Following \cite{AlmgrenChriss:01}, we model the investor’s market impact in a
temporary linear form and thus, when at time $t$ the investor turns over her position $\Phi_t$ at the rate $\phi_t=\dot{\Phi}_t$
the execution price is $S_t+\frac{\Lambda}{2}\phi_t$ for some constant $\Lambda>0$. In our setup the investor has to liquidate
his position, namely $\Phi_T=\Phi_0+\int_{0}^T\phi_t dt=0$.

As a result, the profits and
losses from trading are given by
\begin{equation}\label{2.1}
V^{\Phi_0,\phi}_T:=-\Phi_0 S_0-\int_{0}^T \phi_t S_t dt-\frac{\Lambda}{2}\int_{0}^T \phi^2_t dt
\end{equation}
where $\Phi_0$ is the initial number (deterministic) of shares.
Observe that all the above integrals are defined pathwise.
In particular we do not assume that $S$ is a semi--martingale.

\begin{remark}
Let us explain in more detail formula (\ref{2.1}). At time $0$ the investor has $\Phi_0$ stocks and
the sum $-\Phi_0 S_0$ on her savings account. At time $t\in [0,T)$
the investor buys $\phi_t dt$, an infinitesimal number of stocks or more intuitively
sell $-\phi_t dt$ number of shares and so the (infinitesimal) change in the savings account
is given by $-\phi_t \left(S_t+\frac{\Lambda}{2}\phi_t\right)dt$. Since we liquidate
the portfolio at the maturity date, the terminal portfolio value is equal to
the terminal amount on the savings account and given by
$-\Phi_0 S_0- \int_{0}^T \phi_t \left(S_t+\frac{\Lambda}{2}\phi_t\right)dt$. We arrive at the right-hand side of (\ref{2.1}).
For the case where $S$ is a semi--martingale, by applying the integration by parts formula we get that the right-hand side of (\ref{2.1}) is equal to
$\int_{0}^T \Phi_t dS_t-\frac{\Lambda}{2} \int_{0}^T \phi^2_t dt$.
\end{remark}

For a given $\Phi_0$, the natural class of admissible strategies is
$$\mathcal A_{\Phi_0}:=\left\{\phi \ : \ \phi \ \mbox{is} \ (\mathcal F_t)_{t\in [0,T]}-\mbox{optional} \ \mbox{with} \ \int_{0}^T \phi^2_t  dt<\infty \  \mbox{and} \ \Phi_0+\int_{0}^T \phi_t dt=0  \right\}.$$
As usual, all the equalities and the inequalities are understood in the almost surely sense.

The investor’s preferences are described by an exponential utility function
$u(x)=-\exp(-\alpha x)$, $x\in\mathbb R$,
with constant absolute risk aversion parameter $\alpha>0$, and for a given $\Phi_0$ her goal is to
\begin{equation}\label{2.2}
\mbox{Maximize} \  \mathbb{E}_{\mathbb P}\left[-\exp\left(-\alpha V^{\Phi_0,\phi}_T\right) \right] \ \mbox{over} \
\phi\in\mathcal A_{\Phi_0}.
\end{equation}

Next, we introduce some notations.
Denote by $\mathcal Q$ the set of all equivalent
probability measures $\mathbb Q\sim\mathbb P$ with finite entropy
$\mathbb E_{\mathbb Q}\left[\log\left(\frac{d\mathbb Q}{d\mathbb P}\right)\right]<\infty$
relative to $\mathbb P$.
For any $\mathbb Q\in\mathcal Q$ let $\mathcal M^{\mathbb Q}_{[0,T]}$ be the set of all square integrable $\mathbb Q$--martingales
$M=(M_t)_{0\leq t\leq T}$.
 Moreover, let $\mathcal M^{\mathbb Q}_{[0,T)}$
be the set of all $\mathbb Q$--martingales $M=(M_t)_{0\leq t<T}$ which are defined on the half-open interval $[0,T)$
and satisfy
$||M||_{L^2(dt\otimes\mathbb Q)}:=\mathbb E_{\mathbb Q}\left[\int_{0}^T M^2_t dt\right]<\infty$.

We arrive at the duality result.
\begin{thm}\label{thm2.1}
Let Assumption \ref{asm3.1} be in force.
Then for any $\Phi_0\in\mathbb R$,
\begin{align}\label{3.1}
&\max_{\phi\in\mathcal A_{\Phi_0}}\left\{-\frac{1}{\alpha}\log\mathbb{E}_{\mathbb P}\left[\exp\left(-\alpha V^{\Phi_0,\phi}_T\right) \right]\right\}\nonumber\\
&=\inf_{\mathbb Q\in\mathcal Q}\inf_{M\in\mathcal M^{\mathbb Q}_{[0,T)}}\left\{\mathbb E_{\mathbb Q}\left[\frac{1}{\alpha}\log\left(\frac{d\mathbb Q}{d\mathbb P}\right)+\Phi_0(M_0-S_0)+\frac{1}{2\Lambda}\int_{0}^T |M_t-S_t|^2 dt\right]\right\}.
\end{align}
Furthermore, there is a unique minimizer
$\left(\hat{\mathbb Q},\hat M\in \mathcal M^{\hat{\mathbb Q}}_{[0,T)}\right)$ for the dual problem (right-hand side of (\ref{3.1}))
 and the process
given by
\begin{equation}\label{3.2}
\hat\phi_t:=\frac{\hat M_t-S_t}{\Lambda}, \ \ t\in [0,T)
\end{equation}
is the unique optimal portfolio ($dt\otimes\mathbb P$ a.s.) for the primal problem (\ref{2.2}).
\end{thm}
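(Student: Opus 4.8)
The plan is to prove the inequality ``$\le$'' in \eqref{3.1} (weak duality) for \emph{every} admissible $\phi$ and \emph{every} dual pair $(\mathbb Q,M)$, and then to exhibit one pair at which it becomes an equality. For weak duality, fix $\phi\in\mathcal A_{\Phi_0}$ and $(\mathbb Q,M)$ with $\mathbb Q\in\mathcal Q$, $M\in\mathcal M^{\mathbb Q}_{[0,T)}$. The starting point is the Gibbs variational (Fenchel) inequality for relative entropy,
\begin{equation*}
-\frac{1}{\alpha}\log\mathbb E_{\mathbb P}\left[\exp\left(-\alpha V^{\Phi_0,\phi}_T\right)\right]\le \mathbb E_{\mathbb Q}\left[V^{\Phi_0,\phi}_T\right]+\frac{1}{\alpha}\mathbb E_{\mathbb Q}\left[\log\frac{d\mathbb Q}{d\mathbb P}\right],
\end{equation*}
valid for any $\mathbb Q\sim\mathbb P$, with equality precisely when $\frac{d\mathbb Q}{d\mathbb P}$ is proportional to $\exp(-\alpha V^{\Phi_0,\phi}_T)$. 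It then remains to control $\mathbb E_{\mathbb Q}[V^{\Phi_0,\phi}_T]$. Here I would integrate by parts the continuous finite-variation position $\Phi_t=\Phi_0+\int_0^t\phi_s\,ds$ against the $\mathbb Q$-martingale $M$; using $\Phi_T=0$ this yields the key identity $\mathbb E_{\mathbb Q}[\int_0^T\phi_t M_t\,dt]=-\Phi_0 M_0$. Writing $S_t=M_t-(M_t-S_t)$ in the term $-\int_0^T\phi_t S_t\,dt$ then gives
\begin{equation*}
\mathbb E_{\mathbb Q}\left[V^{\Phi_0,\phi}_T\right]=\Phi_0(M_0-S_0)+\mathbb E_{\mathbb Q}\left[\int_0^T\left(\phi_t(M_t-S_t)-\frac{\Lambda}{2}\phi_t^2\right)dt\right],
\end{equation*}
and bounding the integrand pointwise by its maximum $\frac{1}{2\Lambda}(M_t-S_t)^2$, attained exactly at $\phi_t=(M_t-S_t)/\Lambda$, produces ``$\le$''.

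Next I would settle existence and uniqueness of the primal maximizer. The map $\phi\mapsto V^{\Phi_0,\phi}_T$ is strictly concave on the affine set $\mathcal A_{\Phi_0}$ thanks to the term $-\frac{\Lambda}{2}\int_0^T\phi_t^2\,dt$, and the entropic functional $X\mapsto -\frac{1}{\alpha}\log\mathbb E_{\mathbb P}[\exp(-\alpha X)]$ is concave and increasing, so the primal objective is strictly concave; this already gives uniqueness once an optimizer exists. For existence I would take a maximizing sequence $(\phi^n)$, observe that the quadratic penalty keeps it bounded in $L^2(dt\otimes\mathbb P)$, extract a weakly convergent subsequence whose limit still lies in the weakly closed affine set $\mathcal A_{\Phi_0}$, and pass to the limit in the concave objective by the weak lower semicontinuity of the $L^2$-norm together with upper semicontinuity of the entropic functional. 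Assumption \ref{asm3.1} enters here to ensure finiteness of the objective (via Cauchy--Schwarz it tames $\int_0^T\phi_t S_t\,dt$) and convergence of the relevant exponential moments.

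With the optimizer $\hat\phi$ in hand, I would read off the dual pair from the first-order conditions. Define the Esscher-type measure $\hat{\mathbb Q}$ by $\frac{d\hat{\mathbb Q}}{d\mathbb P}\propto\exp(-\alpha V^{\Phi_0,\hat\phi}_T)$ and set $\hat M_t:=S_t+\Lambda\hat\phi_t$, so that \eqref{3.2} holds by construction. Perturbing $\hat\phi$ by an optional, square-integrable $\psi$ with $\int_0^T\psi_t\,dt=0$ (the admissibility-preserving directions) and setting the Gateaux derivative of the primal objective to zero gives
\begin{equation*}
\mathbb E_{\hat{\mathbb Q}}\left[\int_0^T\psi_t\hat M_t\,dt\right]=0\quad\text{for all such }\psi.
\end{equation*}
Choosing $\psi$ supported on two disjoint intervals of equal length, carrying an $\mathcal F_r$-measurable sign, and letting the intervals shrink (Lebesgue differentiation) turns this into $\mathbb E_{\hat{\mathbb Q}}[\mathbf{1}_A(\hat M_u-\hat M_r)]=0$ for $A\in\mathcal F_r$, $r\le u$, i.e. $\hat M$ is a $\hat{\mathbb Q}$-martingale, with membership in $\mathcal M^{\hat{\mathbb Q}}_{[0,T)}$ following from $\hat\phi\in L^2$ and Assumption \ref{asm3.1}.

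Finally I would verify optimality and uniqueness. By construction $\hat{\mathbb Q}$ makes the entropy inequality tight and $\hat\phi_t=(\hat M_t-S_t)/\Lambda$ makes the quadratic bound tight, so the weak-duality chain collapses at $(\hat{\mathbb Q},\hat M,\hat\phi)$; this proves \eqref{3.1} and exhibits $(\hat{\mathbb Q},\hat M)$ as a dual minimizer. For uniqueness I would run the same chain at \emph{any} minimizer $(\mathbb Q^*,M^*)$ using the admissible $\hat\phi$: since the total is an equality, each step is, so equality in the entropy step forces $\mathbb Q^*=\hat{\mathbb Q}$ and equality in the quadratic step forces $M^*_t=S_t+\Lambda\hat\phi_t=\hat M_t$, while strict concavity gives uniqueness of $\hat\phi$ itself. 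The main obstacle I anticipate is the technical layer beneath the integration-by-parts identity and the first-order conditions: showing that $\int_0^\cdot\Phi_t\,dM_t$ is a genuine $\mathbb Q$-martingale and that the boundary term $\mathbb E_{\mathbb Q}[\Phi_s M_s]$ vanishes as $s\uparrow T$ (which is exactly why $M$ lives only on $[0,T)$), justifying the differentiation under the expectation, and confirming the finiteness of the entropy of $\hat{\mathbb Q}$ --- all of which hinge delicately on the exponential growth control in Assumption \ref{asm3.1}.
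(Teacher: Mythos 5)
Your weak-duality half is sound and coincides with the paper's ``easy'' direction: the Gibbs/Fenchel inequality plus the pointwise bound $\phi_t(M_t-S_t)-\frac{\Lambda}{2}\phi_t^2\le\frac{1}{2\Lambda}(M_t-S_t)^2$ is exactly the mechanism behind the paper's inequality (3.12)-type chain in the final step of its proof (there written as $e^x\ge xy-y(\log y-1)$ and optimized over $\gamma$). The direct-method existence of a primal optimizer is also salvageable: convexity of $\phi\mapsto\mathbb E_{\mathbb P}[e^{-\alpha V^{\Phi_0,\phi}_T}]$ plus strong lower semicontinuity gives weak lower semicontinuity, and the $L^2(dt\otimes\mathbb P)$ bound on a minimizing sequence follows from Jensen and $\mathbb E_{\mathbb P}[\sup_t S^2_t]<\infty$. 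The genuine gap sits in your third step, where you construct the dual optimizer from the primal one. First, you need $\hat{\mathbb Q}\in\mathcal Q$, i.e.\ finite entropy of the Esscher measure; but $H(\hat{\mathbb Q}|\mathbb P)=\mathbb E_{\mathbb P}\bigl[\tfrac{d\hat{\mathbb Q}}{d\mathbb P}\log\tfrac{d\hat{\mathbb Q}}{d\mathbb P}\bigr]$ requires $\mathbb E_{\mathbb P}\bigl[e^{-\alpha V^{\Phi_0,\hat\phi}_T}(-\alpha V^{\Phi_0,\hat\phi}_T)_+\bigr]<\infty$, which is strictly stronger than the finiteness of the optimal value $\mathbb E_{\mathbb P}[e^{-\alpha V^{\Phi_0,\hat\phi}_T}]$ ($xe^x$ is not dominated by $e^x$), and Assumption \ref{asm3.1} does not supply it. Second, membership $\hat M\in\mathcal M^{\hat{\mathbb Q}}_{[0,T)}$ needs $\mathbb E_{\hat{\mathbb Q}}\bigl[\int_0^T\hat\phi^2_t\,dt\bigr]<\infty$; this does \emph{not} follow from ``$\hat\phi\in L^2$ and Assumption \ref{asm3.1}'' as you assert, because the density $\frac{d\hat{\mathbb Q}}{d\mathbb P}\propto\exp\bigl(\alpha(\Phi_0S_0+\int_0^T\phi_tS_t\,dt+\frac{\Lambda}{2}\int_0^T\hat\phi^2_t\,dt)\bigr)$ is exponentially increasing in precisely the quantity you must integrate. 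Third, setting the Gateaux derivative to zero presupposes two-sided differentiability, i.e.\ $\mathbb E_{\mathbb P}[e^{-\alpha V^{\Phi_0,\hat\phi\pm\epsilon\psi}_T}]<\infty$ for small $\epsilon$; expanding the exponent, the cross term $\epsilon\Lambda\int\hat\phi_t\psi_t\,dt$ can only be absorbed at the cost of a factor $(1+\delta)$ on $-\alpha V^{\Phi_0,\hat\phi}_T$, so you would need $\mathbb E_{\mathbb P}[e^{-(1+\delta)\alpha V^{\Phi_0,\hat\phi}_T}]<\infty$, again unavailable. At best, convex combinations $(1-\epsilon)\hat\phi+\epsilon\zeta$ with bounded admissible $\zeta$ yield one-sided variational \emph{inequalities}, and your shrinking-interval argument for the $\hat{\mathbb Q}$-martingale property of $\hat M$ presupposes the equality version, so it inherits the same gap. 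You flagged these points as ``the main obstacle,'' but they are not a technical layer one adds afterwards: they are the reason the primal-first route does not close.

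The paper's architecture is designed to circumvent exactly this. It goes dual-first: Lemma \ref{lem3.2} produces the minimizer $(\hat{\mathbb Q},\hat M)$ by a Komlos-type compactness argument on pairs $(Z,Y)$, exploiting convexity of $(z,y)\mapsto y^2/z$, so that finite entropy and the bound $\mathbb E_{\hat{\mathbb Q}}\int_0^T\hat M^2_t\,dt<\infty$ hold \emph{by construction}; the first-order conditions are then taken on the dual functional $\Psi$, where the monotone-convergence interchange of derivative and expectation is legitimate because $\Psi$ is convex with finite values along the relevant segments. The primal strategy $\hat\phi$ is then manufactured from the dual optimizer via the superhedging Lemma \ref{lem3.1} (Hahn--Banach separation plus Sion's minimax over the balls $\mathcal B_n$), applied to $X=D-\frac{1}{\alpha}\log(d\hat{\mathbb Q}/d\mathbb P)$ --- an ingredient your plan has no counterpart for. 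To repair your proposal you would either need an a priori self-improving integrability estimate at the primal optimizer (some reverse-H\"older-type control of $e^{-\alpha V^{\Phi_0,\hat\phi}_T}$), which is not known here, or you would end up rebuilding the paper's dual-side existence and separation arguments.
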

\begin{remark}
Let us notice that it is sufficient to define the optimal portfolio on the
half-open interval $[0,T)$. We can just set $\phi_T:=0$.
\end{remark}
\begin{remark}
Theorem \ref{thm2.1} can be viewed as an extension
of Proposition A.2 in \cite{BDR:22} for the
case where the investor liquidates
her portfolio at the maturity date.
The liquidation requirement adds additional difficulty
to the dual representation.
In particular the maximization in the dual representation
is over all equivalent probability measures and the corresponding
martingales, in contrast
to Proposition A.2 in \cite{BDR:22} where the dual objects are
just equivalent probability measures.

The duality result in \cite{BDR:22} was applied for solving the problem
of exponential utility maximization in the Bachelier setting for the case
where the investor can
peek some time units into the future (frontrunner). Theorem \ref{thm2.1}
allows us to solve the same problem with the additional requirement
that the portfolio has to be liquidated at the maturity date.
Since the corresponding computations are not straightforward we leave this problem for future research.

In the proof of the duality we assume that $\Lambda>0$. However if we formally take $\Lambda=0$ in the right-hand side of
(\ref{3.1}) and use the convention $\frac{0}{0}:=0$ we get the relation
$\max_{\phi}\left\{-\log\mathbb{E}_{\mathbb P}\left[\exp\left(-\alpha V^{\Phi_0,\phi}_T\right) \right]\right\}
=\inf_{\mathbb Q}\mathbb E_{\mathbb Q}\left[\log\left(\frac{d\mathbb Q}{d\mathbb P}\right)\right]$
where the infimum is taken over all martingale measures. This is (roughly speaking) the classical duality result
for exponential hedging in the frictionless setup (see \cite{D:02, F:00}).
Of course, in the frictionless setup there is no meaning to the initial number of shares
$\Phi_0$ and there is no real restriction in the requirement $\Phi_T=0$.
\end{remark}
We will prove Theorem \ref{thm2.1} at the end of this section, after suitable preparations.
We start with proving the superhedging theorem.
\begin{lem}\label{lem3.1}
Let $X$ be a random variable. Assume that there exists $\alpha>0$ for which
\begin{equation}\label{3.1+}
\mathbb E_{\mathbb P}\left[\exp\left(\alpha \max(-X,0)\right)\right]<\infty.
\end{equation}
There exists $\phi\in\mathcal A_{\Phi_0}$
such that $V^{\Phi_0,\phi}_T\geq X$ if and only if
\begin{equation}\label{3.2++}
\sup_{\mathbb Q\in\mathcal Q}\sup_{M\in\mathcal M^{\mathbb Q}_{[0,T]}}\mathbb E_{\mathbb Q}
\left[X-\Phi_0(M_0-S_0)-\frac{1}{2\Lambda}\int_{0}^T |M_t-S_t|^2 dt\right]\leq 0.
\end{equation}

\end{lem}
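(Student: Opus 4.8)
The proof naturally splits into the two implications, and I would dispatch the ``only if'' direction first by a completion-of-squares argument. Fix $\phi\in\mathcal A_{\Phi_0}$ with $V^{\Phi_0,\phi}_T\geq X$ and an arbitrary pair $\mathbb Q\in\mathcal Q$, $M\in\mathcal M^{\mathbb Q}_{[0,T]}$. The liquidation constraint $\int_0^T\phi_t\,dt=-\Phi_0$, the martingale property of $M$ and the tower property combine to give
$$-\Phi_0 M_0 = \mathbb E_{\mathbb Q}\Big[M_T\int_0^T\phi_t\,dt\Big] = \mathbb E_{\mathbb Q}\Big[\int_0^T\phi_t M_t\,dt\Big],$$
where the last equality uses that $\phi_t$ is $\mathcal F_t$--measurable, so $\mathbb E_{\mathbb Q}[\phi_t M_T\,|\,\mathcal F_t]=\phi_t M_t$, together with Fubini. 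Substituting the definition of $V^{\Phi_0,\phi}_T$ and this identity into the dual functional, the $\Phi_0$--terms and the two stochastic integrals recombine, and a pointwise completion of the square yields
$$\mathbb E_{\mathbb Q}\Big[\int_0^T\Big(\phi_t(M_t-S_t)-\tfrac{\Lambda}{2}\phi_t^2-\tfrac{1}{2\Lambda}(M_t-S_t)^2\Big)dt\Big]=-\tfrac{\Lambda}{2}\,\mathbb E_{\mathbb Q}\Big[\int_0^T\big(\phi_t-\tfrac{M_t-S_t}{\Lambda}\big)^2 dt\Big]\leq 0.$$
The integrability required to run this (finiteness of $\mathbb E_{\mathbb Q}[\max(-X,0)]$ and of the cross term) I would obtain from the entropy inequality $\mathbb E_{\mathbb Q}[Y]\le\frac1\alpha\big(\mathbb E_{\mathbb Q}[\log\frac{d\mathbb Q}{d\mathbb P}]+\log\mathbb E_{\mathbb P}[e^{\alpha Y}]\big)$ applied to $Y=\max(-X,0)$ using (\ref{3.1+}), from Cauchy--Schwarz in $L^2(dt\otimes\mathbb Q)$, and from a localization respecting the constraint. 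Taking the supremum over $(\mathbb Q,M)$ gives (\ref{3.2++}). Note that the extremizer $\phi_t=(M_t-S_t)/\Lambda$ already matches the feedback form (\ref{3.2}), a useful consistency check.

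For the converse I would argue by contraposition and separation. Introduce the super-replication set
$$\mathcal C:=\{Z : V^{\Phi_0,\phi}_T\ge Z \ \text{for some } \phi\in\mathcal A_{\Phi_0}\}.$$
Since $\phi\mapsto V^{\Phi_0,\phi}_T$ is concave (the impact term $-\frac{\Lambda}{2}\int\phi^2$ is concave, the remainder affine) and $\mathcal A_{\Phi_0}$ is convex, $\mathcal C$ is convex and downward closed. Assuming $X\notin\mathcal C$, the plan is to produce $(\mathbb Q,M)$ violating (\ref{3.2++}). The separation must be carried out in a space adapted to Assumption \ref{asm3.1} and (\ref{3.1+}): I would work in the exponential Orlicz space generated by $x\mapsto e^{\alpha|x|}$, whose K\"othe dual is of entropic $L\log L$--type, so that positive continuous linear functionals correspond to densities $\frac{d\mathbb Q}{d\mathbb P}$ of finite entropy, i.e.\ to elements of $\mathcal Q$. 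Once $\mathcal C$ is shown closed in the corresponding weak topology, the Hahn--Banach theorem (using downward closedness to force positivity of the separating functional, then normalizing) yields $\mathbb Q\in\mathcal Q$ with $\mathbb E_{\mathbb Q}[X]>\sup_{Z\in\mathcal C}\mathbb E_{\mathbb Q}[Z]=\sup_{\phi\in\mathcal A_{\Phi_0}}\mathbb E_{\mathbb Q}[V^{\Phi_0,\phi}_T]$.

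It then remains, for this fixed $\mathbb Q$, to identify the right-hand side with an infimum over martingales. Here I would treat the random liquidation constraint $\int_0^T\phi_t\,dt=-\Phi_0$ through a Lagrange multiplier $\lambda\in L^\cdot(\mathcal F_T)$, whose conditional expectations $M_t:=\mathbb E_{\mathbb Q}[\lambda\,|\,\mathcal F_t]$ form a martingale in $\mathcal M^{\mathbb Q}_{[0,T]}$; the same tower/Fubini identity as above turns the Lagrangian into $\Phi_0(M_0-S_0)+\mathbb E_{\mathbb Q}[\int\phi(M-S)-\frac{\Lambda}{2}\int\phi^2]$, and pointwise maximization of this quadratic in $\phi$ gives
$$\sup_{\phi\in\mathcal A_{\Phi_0}}\mathbb E_{\mathbb Q}[V^{\Phi_0,\phi}_T]=\inf_{M\in\mathcal M^{\mathbb Q}_{[0,T]}}\Big\{\Phi_0(M_0-S_0)+\tfrac{1}{2\Lambda}\mathbb E_{\mathbb Q}\Big[\int_0^T(M_t-S_t)^2\,dt\Big]\Big\}.$$
Combining the two displays contradicts (\ref{3.2++}), completing the contrapositive.

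The principal difficulty, as always in such superhedging results, is the closedness of $\mathcal C$, on which the separation rests. The quadratic impact term is the decisive tool: along any sequence $\phi^{(n)}$ witnessing $Z_n\to Z$ with the $Z_n$ controlled from below, the energies $\mathbb E_{\mathbb Q}[\int(\phi^{(n)})^2]$ are coercively bounded, which—after passing to forward convex combinations in the spirit of Koml\'os' lemma—yields weak $L^2(dt\otimes\mathbb Q)$ compactness, while the affine constraint $\int_0^T\phi_t\,dt=-\Phi_0$ and adaptedness survive the limit. A second delicate point, and the reason for the Orlicz setting, is to guarantee that the separating functional is represented by a genuinely equivalent measure of finite entropy rather than a merely finitely additive or singular one; this is precisely what Assumption \ref{asm3.1} and condition (\ref{3.1+}) are designed to secure. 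I expect establishing this closedness, and the accompanying strong duality for the inner problem, to absorb the bulk of the technical work.
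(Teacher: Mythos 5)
Your proposal follows the same two-part architecture as the paper: weak duality via the martingale identity plus completing the square, and the converse via separation followed by a dualization, for the fixed separating measure, of the liquidation constraint over martingales. The ``only if'' half is essentially the paper's argument, and your sketch names the right ingredients, although you never actually produce the pathwise bound $\int_0^T\phi_t^2\,dt\le c\bigl(1+\max(-X,0)+\int_0^T S_t^2\,dt\bigr)$ (obtained in the paper from $V_T^{\Phi_0,\phi}\ge X$ and Cauchy--Schwarz) which, combined with $e^x+y\log y\ge xy$, Assumption \ref{asm3.1} and (\ref{3.1+}), is what makes $\mathbb E_{\mathbb Q}\bigl[\int_0^T\phi_t^2\,dt\bigr]<\infty$ and hence your tower/Fubini step legitimate. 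The first genuine gap is in your separation step: it does not produce an element of $\mathcal Q$. Even granting that you can separate in the exponential Orlicz space and discard the singular part of the dual functional (a real issue, since $e^{|x|}$ fails $\Delta_2$; you acknowledge it but do not resolve it), a nonnegative normalized element of $L\log L$ is a finite-entropy density but need \emph{not} be strictly positive, so the resulting $\mathbb Q$ need not be equivalent to $\mathbb P$ --- and (\ref{3.2++}) quantifies only over equivalent measures, so a non-equivalent $\mathbb Q$ yields no contradiction. The paper resolves this with an elementary device your proposal lacks: separate in $L^1(\mathbb P)$ with $Z\in L^\infty$, use the uniform bound $V_T^{\Phi_0,\phi}\le-\Phi_0 S_0+\frac{1}{2\Lambda}\int_0^T S_t^2\,dt$ (integrable by Assumption \ref{asm3.1}) to replace $Z$ by $\epsilon+Z$ for small $\epsilon>0$, and normalize; the density is then bounded and bounded away from zero, hence equivalent with finite entropy.

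The second gap is the inner duality. The identity
$\sup_{\phi\in\mathcal A_{\Phi_0}}\mathbb E_{\mathbb Q}\bigl[V_T^{\Phi_0,\phi}\bigr]
=\inf_{M}\bigl\{\Phi_0(M_0-S_0)+\tfrac{1}{2\Lambda}\mathbb E_{\mathbb Q}\bigl[\int_0^T(M_t-S_t)^2\,dt\bigr]\bigr\}$
is exactly the crux, and invoking ``a Lagrange multiplier'' asserts rather than proves it: pointwise maximization in $\phi$ only gives the easy inequality $\sup\le\inf$ (weak duality), whereas your contradiction requires $\inf\le\sup$, i.e.\ absence of a duality gap, and existence of a multiplier for this infinite-dimensional constrained problem needs an argument. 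The paper obtains the equality by restricting $\phi$ to the weakly compact balls $\mathcal B_n$ in $L^2(dt\otimes\mathbb Q)$, applying Sion's minimax theorem \cite{S:58} there, computing the inner supremum explicitly (which produces the truncated quadratic $G_n$), and then removing the truncation by using $\mathbb E_{\mathbb Q}[X]<\infty$ to bound $\sup_n\|M^n-S\|_{L^2(dt\otimes\mathbb Q)}$ so that for large $n$ the truncation is inactive. Your Koml\'os/closedness paragraph addresses a different point --- closedness of the hedgeable set, which the paper outsources to Proposition 3.5 of \cite{GR:15} --- and does not substitute for this minimax step; as written, the converse direction of your proof is incomplete at both places where the real work lies.
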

\begin{proof}
We start with the "only if" part of the claim.
Let $\phi\in\mathcal A_{\Phi_0}$ such that
$V^{\Phi_0,\phi}_T\geq X$.
Choose $\mathbb Q\in \mathcal Q$ and
$M\in\mathcal M^{\mathbb Q}_{[0,T]}$.
From (\ref{2.1}) and the
Cauchy–Schwarz inequality
it follows
that
$$
\sqrt{\int_{0}^T S^2_t dt}  \sqrt{\int_{0}^T \phi^2_t dt}-\frac{\Lambda}{2}\int_{0}^T \phi^2_t dt\geq
X+\Phi_0S_0$$
and so, by exploiting the behaviour of the (random) parabola $x\rightarrow x \sqrt{\int_{0}^T S^2_t dt} -\frac{\Lambda}{2} x^2$, we get that
$\int_{0}^T \phi^2_t dt\leq c\left(1+\max(-X,0)+\int_{0}^T S^2_t dt\right)$ for some constant $c>0$.
This together with Assumption \ref{asm3.1}, (\ref{3.1+}) and the well known inequality
$e^x+y\log y\geq xy$, $x\in\mathbb R$, $y>0$
yields
$\mathbb E_{\mathbb Q}\left[\int_{0}^T \phi^2_t dt \right]<\infty.$
Hence,
$\mathbb E_{\mathbb Q}\left[\Phi_0M_0+\int_{0}^T M_t\phi_t dt\right]=0$, and so
from (\ref{2.1}) and the simple inequality
$x y-\frac{\Lambda}{2}x^2\leq \frac{y^2}{2\Lambda}$, $x,y\in\mathbb R$ we obtain
\begin{align*}
&\mathbb E_{\mathbb Q}[X]\leq\mathbb E_{\mathbb Q}\left[\Phi_0 (M_0-S_0)+\int_{0}^T \phi_t (M_t-S_t) dt- \frac{\Lambda}{2}\int_{0}^T \phi^2_t dt\right]\\
&\leq \mathbb E_{\mathbb Q}\left[\Phi_0 (M_0-S_0)+\frac{1}{2\Lambda}\int_{0}^T |M_t-S_t|^2 dt\right]
\end{align*}
and the result follows.

Next, we prove
the "if" part of the claim.
Assume by contradiction that
this part does not hold true. Namely,
 there exists $X$ which satisfies
(\ref{3.1+})--(\ref{3.2++}) and there is no $\phi\in\mathcal A_{\Phi_0}$
such that $V^{\Phi_0,\phi}_T\geq X$.

Following
the same arguments as in
the proof of Proposition 3.5 in \cite{GR:15}
it follows that the set $\Upsilon:=\left(\left\{V^{\Phi_0,\phi}_T:\phi\in\mathcal A_{\Phi_0}\right\}-L^0_{+}(\mathbb P)\right)\cap L^1(\mathbb P)$ is convex and closed in $L^1(\mathbb P)$. Observe that from Assumption \ref{asm3.1}
and (\ref{3.1+})--(\ref{3.2++}) (take $\mathbb Q=\mathbb P$ and $M\equiv 0$ in (\ref{3.2++})) it follows that
$X\in L^1(\mathbb P)$. Since there is no $\phi\in\mathcal A_{\Phi_0}$
such that $V^{\Phi_0,\phi}_T\geq X$ we get that $X\in L^1(\mathbb P)\setminus\Upsilon$.

Thus, by the Hahn-Banach Separation
Theorem we can find $Z\in L^{\infty}\setminus\{0\}$
such that
$\mathbb E_{\mathbb P}[Z X]>\sup_{\upsilon\in \Upsilon}\mathbb E_{\mathbb P}\left[Z \upsilon\right].$
Since $\left(V^{\Phi_0,\hat\phi}_T-L^0_{+}(\mathbb P)\right)\cap L^1(\mathbb P)\subset\Upsilon$ for $\hat\phi\equiv -\frac{\Phi_0}{T}$
we must have $Z\geq 0$.
Moreover, from (\ref{2.1}) we have $V^{\Phi_0,\phi}_T\leq -\Phi_0S_0+\frac{1}{2\Lambda}\int_{0}^T S^2_t dt$
for all $\phi$ and so,
from Assumption \ref{asm3.1}
it follows that there exists $\epsilon>0$ such that
$\mathbb E_{\mathbb P}[(\epsilon+Z) X]>\sup_{\upsilon\in \Upsilon}\mathbb E_{\mathbb P}\left[(\epsilon+Z) \upsilon\right].$
We conclude that for the probability measure $\mathbb Q$ given by
$\frac{d\mathbb Q}{d\mathbb P}:=\frac{\epsilon+ Z}{\epsilon +\mathbb  E_{\mathbb P}[Z]}$
we have $\mathbb Q\in \mathcal Q$ and
\begin{equation}\label{3.5}
\mathbb E_{\mathbb Q}[X]>\sup_{\phi\in\mathcal A_{\Phi_0}}\mathbb E_{\mathbb Q}\left[V^{\Phi_0,\phi}_T\right]
\end{equation}
where
we set $\mathbb E_{\mathbb Q}\left[V^{\Phi_0,\phi}_T\right]:=-\infty$ if
$V^{\Phi_0,\phi}_T\notin L^1(\mathbb Q)$.

Next, fix $n\in\mathbb N$ and introduce the set
$\mathcal B_n:=\left\{\phi\in L^2(dt\otimes\mathbb Q): \ ||\phi||_{L^2(dt\otimes\mathbb Q)}\leq n\right\}$.
We argue that for any $n\in\mathbb N$
\begin{align}\label{3.6}
&\sup_{\phi\in\mathcal A_{\Phi_0}}\mathbb E_{\mathbb Q}\left[V^{\Phi_0,\phi}_T\right]\nonumber\\
&\geq\sup_{\phi\in\mathcal A_{\Phi_0} \cap \mathcal B_n}\mathbb E_{\mathbb Q}\left[V^{\Phi_0,\phi}_T\right]\nonumber\\
&=\sup_{\phi\in  \mathcal B_n}\inf_{M\in\mathcal M^{\mathbb Q}_{[0,T]}}
\mathbb E_{\mathbb Q}\left[V^{\Phi_0,\phi}_T+M_T\left(\Phi_0+\int_{0}^T\phi_t dt\right)\right]\nonumber\\
&=\inf_{M\in\mathcal M^{\mathbb Q}_{[0,T]}}\sup_{\phi\in \mathcal B_n}
\mathbb E_{\mathbb Q}\left[V^{\Phi_0,\phi}_T+M_T\left(\Phi_0+\int_{0}^T\phi_t dt\right)\right].\nonumber\\
&
\end{align}
Indeed, the inequality is obvious. The first equality follows from the fact that if
$\phi\in \mathcal B_n\setminus \mathcal A_{\Phi_0}$ then
$\inf_{M\in\mathcal M^{\mathbb Q}_{[0,T]}}
\mathbb E_{\mathbb Q}\left[M_T\left(\Phi_0+\int_{0}^T\phi_t dt\right)\right]=-\infty$.
For the last equality in (\ref{3.6}) we apply a minimax theorem.
Consider the vector space $\mathcal M^{\mathbb Q}_{[0,T]}$
with the
$L^2(dt\otimes\mathbb Q)$ norm
and the set
$\mathcal B_n$
with the weak topology which corresponds to $L^2(dt\otimes\mathbb Q)$.
Then both of these sets are convex subsets of topological
vector spaces and the latter set is even compact.
Moreover,
$(\phi,M)\rightarrow \mathbb E_{\mathbb Q}\left[V^{\Phi_0,\phi}_T+M_T\left(\Phi_0+\int_{0}^T\phi_t dt\right)\right]$
is upper semi-continuous and concave in $\phi$ and
convex (even affine) in $M$. We can thus apply Theorem 4.2 in \cite{S:58} to obtain the second equality.

Next, choose $M\in\mathcal M^{\mathbb Q}_{[0,T]}$
and introduce the process $\phi$ (which depends on $n$ and $M$)
$$\phi_t:=\frac{n(M_t-S_t)}{\max\left(\Lambda n,||M-S||_{{L^2(dt\otimes\mathbb Q)}}\right)}, \ \ t\in [0,T].$$
Observe that $\phi\in \mathcal B_n$ and simple computations give
\begin{align*}
&\mathbb E_{\mathbb Q}\left[V^{\Phi_0,\phi}_T+M_T\left(\Phi_0+\int_{0}^T\phi_t dt\right)\right]\\
&=\mathbb E_{\mathbb Q}\left[\Phi_0 (M_0-S_0)+\int_{0}^T \phi_t (M_t-S_t) dt- \frac{\Lambda}{2}\int_{0}^T \phi^2_t dt\right]\\
&=\mathbb E_{\mathbb Q}\left[\Phi_0 (M_0-S_0)\right]+G_n\left( ||M-S||_{L^2(dt\otimes\mathbb Q)}\right)
\end{align*}
where
$$G_n(x) = \begin{cases}
  \frac{x^2}{2\Lambda}  & \mbox{if} \ \ x<\Lambda n\\
  n x-\frac{\Lambda}{2} n^2 & \ \mbox{otherwise}.
\end{cases}
$$
Since $n\in\mathbb N$ and $M\in\mathcal M^{\mathbb Q}_{[0,T]}$ were arbitrary, from (\ref{3.5})--(\ref{3.6}) we conclude that there exists a sequence of martingales
$M^n\in\mathcal M^{\mathbb Q}_{[0,T]}$, $n\in\mathbb N$ such that
\begin{equation}\label{3.6+}
\mathbb E_{\mathbb Q}[X]> \sup_{n\in\mathbb N} \left\{\mathbb E_{\mathbb Q}\left[\Phi_0 (M^n_0-S_0)\right]+G_n\left(||M^n-S||_{L^2(dt\otimes\mathbb Q)}\right)\right\}.
\end{equation}
From the fact that $\frac{d\mathbb Q}{d\mathbb P}$ is bounded we have
$\mathbb E_{\mathbb Q}[X]<\infty$, and so
$\sup_{n\in\mathbb N}||M^n-S||_{L^2(dt\otimes\mathbb Q)}<\infty$.
Thus, from (\ref{3.6+}) we obtain that
 for any $k>\frac{1}{\Lambda}\sup_{n\in\mathbb N}||M^n-S||_{L^2(dt\otimes\mathbb Q)}$
$$\mathbb E_{\mathbb Q}[X]>\mathbb E_{\mathbb Q}\left[\Phi_0 (M^k_0-S_0)+\frac{1}{2\Lambda}\int_{0}^T |M^k_t-S_t|^2 dt\right]$$
which is a contradiction to
(\ref{3.2++}).
This completes the proof.
\end{proof}
\begin{lem}\label{lem3.2}
There exists a unique minimizer
$\left(\hat{\mathbb Q},\hat M\in \mathcal M^{\hat{\mathbb Q}}_{[0,T)}\right)$ for the optimization problem
given by the right-hand side of (\ref{3.1}).
\end{lem}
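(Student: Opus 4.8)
The plan is to recast the right-hand side of (\ref{3.1}) as a single convex minimization over the \emph{fixed} reference measure $\mathbb P$ and then to apply the direct method of the calculus of variations. The obstacle in working directly with the pair $(\mathbb Q,M)$ is that $M$ is a martingale under the \emph{varying} measure $\mathbb Q$, so neither convexity of the target functional nor compactness of a minimizing sequence is apparent. To remove this coupling I would reparametrize: writing $Z:=d\mathbb Q/d\mathbb P$ and $Z_t:=\mathbb E_{\mathbb P}[Z\mid\mathcal F_t]$ for its density process, Bayes' rule shows that $M\in\mathcal M^{\mathbb Q}_{[0,T)}$ if and only if $N_t:=Z_tM_t$ is a $\mathbb P$-martingale on $[0,T)$. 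Conditioning term by term, the dual functional becomes
\[
F(Z,N)=\frac{1}{\alpha}\mathbb E_{\mathbb P}[Z\log Z]+\Phi_0\,\mathbb E_{\mathbb P}[N_0-Z_0S_0]+\frac{1}{2\Lambda}\int_{0}^{T}\mathbb E_{\mathbb P}\Big[\frac{|N_t-Z_tS_t|^2}{Z_t}\Big]\,dt,
\]
which is expressed entirely through $\mathbb P$. The decisive point is that $(z,n)\mapsto|n-zs|^2/z$ is the perspective of a convex quadratic, hence jointly convex on $\{z>0\}$ (with value $+\infty$ when $z=0,\ n\neq0$ and value $0$ when $z=n=0$); so $F$ is a jointly convex, lower semicontinuous functional of the $\mathbb P$-martingales $(Z,N)$, the density processes forming a convex set and the $\mathbb P$-martingale property being a closed linear constraint.

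Next I would establish coercivity on the sublevel sets. Writing $\mathbb E_{\mathbb Q}[M_0]=\frac1T\mathbb E_{\mathbb Q}[\int_0^TM_t\,dt]$ via the martingale property and splitting $M_t=(M_t-S_t)+S_t$, the linear term is controlled, through Young's and the Cauchy--Schwarz inequalities, by a small fraction of the quadratic term $\frac{1}{2\Lambda}\|M-S\|^2_{L^2(dt\otimes\mathbb Q)}$ plus a multiple of $\mathbb E_{\mathbb Q}[\sup_{0\le t\le T}|S_t|]$; the entropy inequality $e^x+y\log y\ge xy$ together with Assumption \ref{asm3.1} bounds $\mathbb E_{\mathbb Q}[\sup_t|S_t|]$ by an arbitrarily small multiple of the relative entropy plus a constant. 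Hence on a minimizing sequence $(\mathbb Q^n,M^n)$ one gets uniform bounds on $\mathbb E_{\mathbb Q^n}[\log(d\mathbb Q^n/d\mathbb P)]$ and on $\|M^n-S\|_{L^2(dt\otimes\mathbb Q^n)}$. The entropy bound makes $\{Z^n\}$ uniformly integrable (de la Vall\'ee--Poussin); since conditional expectations of a uniformly integrable family are again uniformly integrable, the density processes $\{Z^n_t\}$, and then, via a Cauchy--Schwarz estimate against the perspective bound, the martingales $\{N^n_t\}$, are uniformly integrable in $L^1(dt\otimes\mathbb P)$. By the Dunford--Pettis theorem I may pass to a subsequence along which $(Z^n,N^n)$ converges weakly; lower semicontinuity of the convex functional $F$ (equivalently, after passing to Koml\'os/forward convex combinations, Fatou's lemma applied to the almost sure limit) then yields a minimizer $(\hat Z,\hat N)$, while $L^1$-convergence of the terminal densities preserves $\mathbb E_{\mathbb P}[\hat Z]=1$.

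It then remains to verify that $\hat{\mathbb Q}$ is genuinely equivalent to $\mathbb P$ and that the minimizer is unique. Finiteness of the perspective integral forces $\hat N_t=0$ on $\{\hat Z_t=0\}$, so $\hat M_t:=\hat N_t/\hat Z_t$ is well defined wherever $\hat Z_t>0$; to exclude $\mathbb P(\hat Z=0)>0$ I would perturb the candidate towards $\mathbb P$, considering $(1-\epsilon)(\hat Z,\hat N)+\epsilon(1,0)$, and observe that the right derivative at $\epsilon=0$ of the entropy term is $-\infty$ on $\{\hat Z=0\}$ while the remaining terms have finite derivative, contradicting minimality unless $\hat Z>0$ almost surely; strict positivity of the density process $\hat Z_t$ then follows from the martingale property, and $\hat M$ is a square-integrable $\hat{\mathbb Q}$-martingale by construction, so $(\hat{\mathbb Q},\hat M)$ is admissible. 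Uniqueness is the cleanest step: $\mathbb E_{\mathbb P}[Z\log Z]$ is strictly convex in $Z$, so any two minimizers share the same $\hat Z$, and for fixed $\hat Z$ the perspective term is strictly convex in $N$ on $\{\hat Z_t>0\}$, forcing the same $\hat N$ and hence the same $\hat M$. I expect the main difficulty to lie in the second paragraph---simultaneously securing compactness of the density and the martingale parts under the single reference measure $\mathbb P$ and matching it with the lower semicontinuity of the perspective functional---together with the equivalence argument of the third paragraph.
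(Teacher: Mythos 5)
Your proposal follows essentially the same route as the paper's proof of Lemma \ref{lem3.2}: your reparametrization $N_t=Z_tM_t$ is exactly the paper's bijection $(Z,Y)\leftrightarrow(\mathbb Q,M)$ with $Y_t=M_t\,\mathbb E_{\mathbb P}[Z\mid\mathcal F_t]$, and you then use the same ingredients---joint convexity of the perspective function $(z,y)\mapsto y^2/z$, a minimizing sequence with entropy and $L^2$ bounds, de la Vall\'ee--Poussin uniform integrability, Koml\'os-type/weak $L^1$ limits combined with Fatou for lower semicontinuity, a one-sided perturbation derivative toward $(1,0)$ to force $\hat Z>0$, and strict convexity (entropy in $Z$, then the perspective term in $N$ for fixed $Z$) for uniqueness. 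The only deviations are cosmetic: you invoke Dunford--Pettis where the paper runs the Koml\'os argument explicitly (which it needs anyway to produce a genuine martingale on $[0,T)$, via a dense set of times $t_m\uparrow T$ and conditional-expectation patching---a step you gloss but whose tools you cite), and you obtain uniform integrability of the $N^n$ by Cauchy--Schwarz against the perspective bound where the paper uses the superlinear function $g(y)=\inf_{z>0}\{y^2/z+z\log z\}$.
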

\begin{proof}
Denote by $C$ the set of all pairs $(Z,Y)$ such that
$Z>0$ is a random variable which satisfies
 $\mathbb E_{\mathbb P}[Z]=1$,
 $\mathbb E_{\mathbb P}\left[ Z\log Z\right]<\infty$
 and
$Y=(Y_t)_{0\leq t< T}$ is a $\mathbb P$--martingale which satisfies
$\mathbb E_{\mathbb P}\left[\int_{0}^T \frac{Y^2_t}{\mathbb E_{\mathbb P}\left[Z|\mathcal F_t\right]}dt\right]<\infty$.
Let us notice that
the function $(z,y)\rightarrow y^2/z$ is convex on $\mathbb R_{++}\times\mathbb R$
and so $C$ is a convex set.
Define a map $\Psi:C\rightarrow \mathbb R$
by
$$\Psi(Z,Y):=\mathbb E_{\mathbb P}\left[\frac{1}{\alpha} Z\log Z+\Phi_0\left(Y_0-S_0 Z\right)+
\frac{Z}{2\Lambda}\int_{0}^T
\left(\frac{Y_t}
{\mathbb E_{\mathbb P}[Z|\mathcal F_t]}-S_t\right)^2 dt\right].$$

Observe that there is a bijection
$(Y,Z)\in C\leftrightarrow \left(\mathbb Q\in\mathcal Q,M\in\mathcal M^{\mathbb Q}_{[0,T)}\right)$
given by
$Z:=\frac{d\mathbb Q}{d\mathbb P}$ and
$M_t:=\frac{Y_t}{\mathbb E_{\mathbb P}[Z|\mathcal F_t]}$, $t\in [0,T)$
and for this bijection we have
$$\Psi(Z,Y)=\mathbb E_{\mathbb Q}\left[\frac{1}{\alpha}\log\left(\frac{d\mathbb Q}{d\mathbb P}\right)+\Phi_0 (M_0-S_0)+\frac{1}{2\Lambda}\int_{0}^T |M_t-S_t|^2 dt\right].
$$

Thus, in order to prove the Lemma it is sufficient to show that there exists a unique minimizer for
$\Psi:C\rightarrow \mathbb R$.
Let us notice that the convexity of the map $(z,y)\rightarrow y^2/z$
implies the convexity of $\Psi$.
From the strict convexity of the functions $z\rightarrow z\log z$ and the map $y\rightarrow y^2$ it follows that
$\Psi$ is strictly convex and so
the uniqueness of a minimizer is immediate. It remains to prove the existence of a minimizer.

Let $(Z^n,Y^n)\in C$, $n\in\mathbb N$ be a sequence such that
\begin{equation}\label{3.12}
\lim_{n\rightarrow\infty}\Psi(Z^n,Y^n)=\inf_{(Z,Y)\in C}\Psi(Z,Y).
\end{equation}
Assumption \ref{asm3.1}
and (\ref{3.12}) yield that without loss of generality we can assume that
$\sup_{n\in\mathbb N}\mathbb E_{\mathbb P}\left [ Z^n\log {Z^n}\right]<\infty$ and
$\sup_{n\in\mathbb N}\mathbb E_{\mathbb P}\left[\int_{0}^T \frac{|Y^n_t|^2}
{\mathbb E_{\mathbb P}\left[Z^n|\mathcal F_t\right]}dt\right]<\infty$.
Thus, the de la Vall\`{e}e--Poussin
criterion ensures that $Z^n$, $n\in\mathbb N$ are uniformly integrable.
Let us argue that for any $s<T$ the random variables
$Y^n_s$, $n\in\mathbb N$ are uniformly integrable

Fix $s<T$. From the Jensen inequality and fact that $Y^n$ is a martingale it follows that for any given $n$
the function $t\rightarrow\mathbb E_{\mathbb P}\left[\frac{|Y^n_t|^2}{\mathbb E_{\mathbb P}\left[Z^n|\mathcal F_t\right]}\right]$ is non decreasing, and so
$\sup_{n\in\mathbb N}\mathbb E_{\mathbb P}\left[ \frac{|Y^n_s|^2}{\mathbb E_{\mathbb P}\left[Z^n|\mathcal F_s\right]}\right]<\infty.$
This together with the inequality $\sup_{n\in\mathbb N}\mathbb E_{\mathbb P}\left [ Z^n\log {Z^n}\right]<\infty$
gives that
$\sup_{n\in\mathbb N}\mathbb E_{\mathbb P}[ g|Y^n_s|]<\infty$ where
$g(y):=\inf_{z>0} \left\{\frac{y^2}{z}+z\log z\right\}$, $y>0$.
For a given $y>0$, the function $z\rightarrow \frac{y^2}{z}+z\log z$ is convex
and attains its minimum at the unique $z=z(y)$ which satisfies
$1+\log z=\frac{y^2}{z^2}$. Obviously $\lim_{y\rightarrow\infty}z(y)=\infty$
and $y=z(y)\sqrt{1+\log(z(y))}$. Thus,
$\lim_{y\rightarrow\infty}\frac{g(y)}{y}\geq \lim_{z\rightarrow\infty} \frac{z\log z}{z\sqrt{1+\log z }}=\infty.$
Hence, from the de la Vall\`{e}e--Poussin
criterion we conclude that
$Y^n_s$, $n\in\mathbb N$ are uniformly integrable.

Next, define the sequence of random varialbes $(H_n)_{n\in\mathbb N}\in L^1(dt\otimes \mathbb P, [0,2T]\times \Omega)$ by
$$H^n(t,\omega) := \begin{cases}
  Y^n_t(\omega)  & \mbox{if} \ \ t<T\\
  Z^n(\omega) & \ \mbox{otherwise}.
\end{cases}
$$
Observe that the relations
$\sup_{n\in\mathbb N}\mathbb E_{\mathbb P}\left [ Z^n\log {Z^n}\right]<\infty$,
$\sup_{n\in\mathbb N}\mathbb E_{\mathbb P}\left[\int_{0}^T \frac{|Y^n_t|^2}
{\mathbb E_{\mathbb P}\left[Z^n|\mathcal F_t\right]}dt\right]<\infty$
and $\lim_{y\rightarrow\infty}\frac{g(y)}{y}=\infty$ yield that
$(H^n)_{n\in\mathbb N}$ is a bounded sequence in
$L^1(dt\otimes \mathbb P, [0, 2T]\times \Omega)$.
Hence, from the well-known Komlos-argument, (see Theorem 1.3 in \cite{DS:94})
there exists a sequence
$(\hat H^n)\in conv\left(H^n,H^{n+1}...\right)$, $n\in\mathbb N$ such that
$(\hat H^n)_{n\in\mathbb N}$ converge in probability ($dt\otimes\mathbb P$) to some $\hat H \in L^1(dt\otimes \mathbb P, [0, 2T]\times \Omega)$.
From the bounded convergence theorem $\arctan(\hat H^n)\rightarrow\arctan(H)$ in $L^1(dt\otimes \mathbb P, [0, 2T]\times \Omega)$.
Thus,
from the Fubini theorem we obtain that there exists a dense set $\mathcal I\subset [0,2T]$ such that
for any $t\in \mathcal I$
$(\hat H^n_t)_{n\in\mathbb N}$ converge in probability to $\hat H_t$.
Choose a countable subset $\mathcal J\subset \mathcal I\cap [0,T)$
of the form $\mathcal J=\{t_1<t_2<...\}$ such that $\lim_{n\rightarrow \infty} t_n=T$.

We conclude
that there exist convex combinations (same combinations as for $H^n$)
$(\hat Z^n,\hat Y^n)\in conv\left((Z^{n},Y^{n}),(Z^{n+1},Y^{n+1})...\right)$, $n\in\mathbb N$,
such that
the sequence $(\hat Z^n)_{n\in\mathbb N}$ converges in probability to
some $\hat Z$ and
for any $m\in\mathbb N$ the sequence $(\hat Y^n_{t_m})_{n\in\mathbb N}$
converges in probability to some $U_m$.
From the uniform integrability of the sequences $(Z^n)_{n\in\mathbb N}$ and
 $(Y^n_{t_m})_{n\in\mathbb N}$, $m\in\mathbb N$ we conclude
\begin{equation}\label{3.14}
\hat Z^n\rightarrow \hat Z \ \mbox{in} \ L^1(\mathbb P)
\end{equation}
and for any $m\in\mathbb N$
\begin{equation}\label{3.15}
\hat Y^n_{t_m}\rightarrow U_m\ \mbox{in} \ L^1(\mathbb P).
\end{equation}
Notice that (\ref{3.14}) implies $\mathbb E_{\mathbb P}[\hat Z]=1$.
 Moreover, the function $x\rightarrow x\log x$, $x>0$ is bounded from below,
 thus from the Fatou lemma and the convexity of the function $x\rightarrow x\log x$ we get
$\mathbb E_{\mathbb P}\left [ \hat Z\log {\hat Z}\right]\leq \sup_{n\in\mathbb N}
\mathbb E_{\mathbb P}\left [ Z^n\log {Z^n}\right]<\infty$.

Next, define the process $\hat Y=(\hat Y_t)_{0\leq t<T}$ by
$\hat Y_t:=\sum_{m=1}^{\infty}\mathbf 1_{\{t\in [t_{m-1},t_{m})\}} \mathbb E_{\mathbb P}[U_m|\mathcal F_t]$ where we set
$t_0:=0$.
Clearly, for any $n$ the process $\hat Y^n=\hat Y^n_{[0,T)}$ is a martingale (convex combination of martingales), and so from (\ref{3.15}) we obtain
that $\hat Y=\hat Y_{[0,T)}$ is a martingale and
\begin{equation}\label{new}
\hat Y^n_t\rightarrow \hat Y_t \ \mbox{in} \ L^1(\mathbb P) \ \forall t\in [0,T].
\end{equation}
From (\ref{3.14}) we get
\begin{equation}\label{new1}
\mathbb E_{\mathbb P}[\hat Z^n|\mathcal F_t]\rightarrow \mathbb E_{\mathbb P}[\hat Z|\mathcal F_t] \ \mbox{in}  \ L^1(\mathbb P) \ \forall t\in [0,T].
\end{equation}

By combining the Fatou lemma, the convexity of $\Psi$, (\ref{3.12})
and (\ref{new})--(\ref{new1})
we obtain
$$\Psi(\hat Z,\hat Y)\leq \lim_{n\rightarrow\infty}\Psi(\hat Z^n,\hat Y^n)=
\inf_{(Z,Y)\in C}\Psi(Z,Y).$$
A priori it might happen that
$dt\otimes \mathbb P\left(\mathbb E_{\mathbb P}[\hat Z|\mathcal F_t]=0\right)>0$ and so we need to be careful with the definition of
$\Psi(\hat Z,\hat Y)$.
From (\ref{new})--(\ref{new1}) it follows that we have the convergence in probability
$\mathbb E_{\mathbb P}[\hat Z^n|\mathcal F_{\cdot}]\rightarrow \mathbb E_{\mathbb P}[\hat Z|\mathcal F_{\cdot}]$ and
$\hat Y^n\rightarrow \hat Y$
with respect to the product measure $dt\otimes\mathbb P$. Hence, by taking a subsequence (which for simplicity we still denote by $n$)
we can assume that
$\mathbb E_{\mathbb P}[\hat Z^n|\mathcal F_{\cdot}]\rightarrow \mathbb E_{\mathbb P}[\hat Z|\mathcal F_{\cdot}]$
and $\hat Y^n\rightarrow \hat Y$
$dt\otimes\mathbb P$ a.s.
Since $\lim_{n\rightarrow\infty}\Psi(\hat Z^n,\hat Y^n)<\infty$,
from the Fatou lemma it follows that
$\mathbb E_{\mathbb P}\left[\int_{0}^T\left(\lim\inf_{n\rightarrow \infty}\frac{|\hat Y^n_t|^2}{\mathbb E_{\mathbb P}[\hat Z^n|\mathcal F_t]}\right)dt
\right]<\infty$.
In particular $\lim\inf_{n\rightarrow \infty}\frac{|\hat Y^n_t|^2}{\mathbb E_{\mathbb P}[\hat Z^n|\mathcal F_t]}<\infty$
$dt\otimes \mathbb P$ a.s. This together
with the above convergence of the sequences $(\mathbb E_{\mathbb P}[\hat Z^n|\mathcal F_{\cdot}])_{n\in\mathbb N}$ and
$(\hat Y^n)_{n\in\mathbb N}$
yields the implication
$\mathbb E_{\mathbb P}[\hat Z|\mathcal F_t]=0\Rightarrow\hat Y_t=0$
$dt\otimes\mathbb P$ a.s.
Thus, we set
$\frac{\hat Y_t}{\mathbb E_{\mathbb P}[\hat Z|\mathcal F_t]}:=0$ if
$\mathbb E_{\mathbb P}[\hat Z|\mathcal F_t]=0$.

Finally, in order to complete the proof it remains to show that $\hat Z>0$ a.s.
To this end define the function $f:[0,1]\rightarrow\mathbb R$
by
$f(\alpha):=\Psi(\alpha +(1-\alpha)\hat Z,\hat Y)$, $\alpha\in [0,1].$
From the convexity of $\Psi$ it follows that $f$ is convex. The inequality $\Psi(\hat Z,\hat Y)\leq\inf_{( Z, Y)\in C}\Psi( Z, Y)$ yields that
the right--hand derivative $f'(0+)\geq 0$. Moreover, from the monotone (derivative of a convex function) convergence theorem it follows that
we can interchange derivative and expectation. Thus,
\begin{align*}
&0\leq f'(0+)= \mathbb E_{\mathbb P}\left[\frac{1}{\alpha}(1-\hat Z)\log\hat Z-\Phi_0S_0(1-\hat Z)\right]\\
&+\frac{1}{2\Lambda}\mathbb E_{\mathbb P}\left[(1-\hat Z)\int_{0}^T
S^2_t dt+\int_{0}^T\mathbf{1}_{\{\mathbb E_{\mathbb P}[\hat Z|\mathcal F_t]>0\}}
\left(\frac{\hat Y^2_t}{\mathbb E_{\mathbb P}[\hat Z|\mathcal F_t]}-\frac{\hat Y^2_t}{\mathbb E^2_{\mathbb P}[\hat Z|\mathcal F_t]}\right)dt\right].
\end{align*}
We conclude that $\mathbb E_{\mathbb P}[\log\hat Z]>-\infty$ and complete the proof.
\end{proof}

Now, we have all the ingredients for the proof of \textbf{Theorem \ref{thm2.1}.}
\begin{proof}
Let $(\hat{\mathbb Q}\in\mathcal Q,\hat M\in\mathcal M^{\hat{\mathbb Q}}_{[0,T]})$
be the minimizer from Lemma \ref{lem3.2}. Denote
$$D:=
\mathbb E_{\hat{\mathbb Q}}\left[\frac{1}{\alpha}\log\left(\frac{d\hat{\mathbb Q}}{d\mathbb P}\right)+\Phi_0(\hat M_0-S_0)+\frac{1}{2\Lambda}\int_{0}^T
|\hat {M}_t-S_t|^2 dt\right].$$
Let us show that there exists $\hat\phi\in\mathcal A_{\Phi_0}$ such that
\begin{equation}\label{3.16}
V^{\Phi_0,\hat\phi}_T\geq D-\frac{1}{\alpha} \log\left(\frac{d\hat{\mathbb Q}}{d\mathbb P}\right).
\end{equation}
We apply Lemma \ref{lem3.1}
for $X:= D-\frac{1}{\alpha} \log\left(\frac{d\hat{\mathbb Q}}{d\mathbb P}\right).$
 Clearly, $X$ satisfies
(\ref{3.1+}), and so we need to show that for any $\mathbb Q\in\mathcal Q$ and
$M\in \mathcal M^{\mathbb Q}_{[0,T]}$ we have
\begin{equation}\label{3.17}
\mathbb E_{\mathbb Q}
\left[\frac{1}{\alpha} \log\left(\frac{d\hat{\mathbb Q}}{d\mathbb P}\right)+\Phi_0(M_0-S_0)+\frac{1}{2\Lambda}\int_{0}^T |M_t-S_t|^2 dt\right]\geq D.
\end{equation}
Choose $\mathbb Q\in\mathcal Q$ and
$M\in \mathcal M^{\mathbb Q}_{[0,T]}$.
Define
$(Z,Y),(\hat Z,\hat Y)\in C$ by
$Z=\frac{d\mathbb Q}{d\mathbb P}$,
$\hat Z=\frac{d\hat{\mathbb Q}}{d\mathbb P}$,
$Y_t=M_t\mathbb E_{\mathbb P}[Z|\mathcal F_t]$ and $\hat Y_t=\hat M_t\mathbb E_{\mathbb P}[\hat Z|\mathcal F_t]$, $t<T$.
Define the convex function $h:[0,1]\rightarrow\mathbb R_{+}$
by $h(\alpha):=\Psi\left(\alpha Z+(1-\alpha)\hat Z,\alpha Y+(1-\alpha)\hat Y\right)$, $\alpha\in [0,1].$ The function $h$ attains its minimum
at $\alpha=0$, and so
$h'(0+)\geq 0$. Again, the monotone convergence theorem allows
us to interchange derivative and expectation. Thus,
\begin{align}\label{3.18}
&0\leq h'(0+)= \mathbb E_{\mathbb P}\left[\frac{1}{\alpha}(Z-\hat Z)\log\hat Z+\Phi_0(Y_0-\hat Y_0)-\Phi_0 S_0(Z_0-\hat Z_0)\right]\nonumber\\
&+\frac{1}{2\Lambda}\mathbb E_{\mathbb P}\left[\int_{0}^T
\left((Z-\hat Z) S^2_t-2(Y_t-\hat Y_t)S_t\right)dt\right]\nonumber\\
&+\frac{1}{2\Lambda}\mathbb E_{\mathbb P}\left[\frac{1}{2\Lambda}\int_{0}^T
\frac{2 \hat Y_t(Y_t-\hat Y_t)\mathbb E_{\mathbb P}[\hat Z|\mathcal F_t]-
\mathbb E_{\mathbb P}[Z-\hat Z|\mathcal F_t]\hat Y^2_t
}{\mathbb E^2_{\mathbb P}[\hat Z|\mathcal F_t]}\right].
\end{align}
Observe that for any $t<T$,
$2Y_t\hat Y_t \mathbb E_{\mathbb P}[\hat Z|\mathcal F_t]\leq \mathbb E_{\mathbb P}[Z|\mathcal F_t]\hat Y^2_t+\frac{Y^2_t \mathbb E^2_{\mathbb P}[\hat Z|\mathcal F_t]}{\mathbb E_{\mathbb P}[Z|\mathcal F_t]}.$
This together with (\ref{3.18})
gives
\begin{align*}
&0\leq \mathbb E_{\mathbb P}\left[\frac{1}{\alpha}(Z-\hat Z)\log\hat Z+\Phi_0(Y_0-\hat Y_0)-\Phi_0 S_0(Z_0-\hat Z_0)\right]\\
&+\frac{1}{2\Lambda}\mathbb E_{\mathbb P}\left[\int_{0}^T
\left((Z-\hat Z)S^2_t-2(Y_t-\hat Y_t)S_t\right)dt\right]\\
&+\frac{1}{2\Lambda}\mathbb E_{\mathbb P}\left[\int_{0}^T
\left(\frac{Y^2_t}{\mathbb E_{\mathbb P}[Z|\mathcal F_t]}-\frac{\hat Y^2_t}{\mathbb E_{\mathbb P}[\hat Z|\mathcal F_t]}\right)dt
\right]
\end{align*}
which is exactly (\ref{3.17}). We conclude that (\ref{3.16}) holds true, and so
\begin{equation}\label{final1}
\mathbb E_{\mathbb P}\left[e^{-\alpha V^{\Phi_0,\hat\phi}_T}\right]\leq e^{-\alpha D}.
\end{equation}

We arrive at the final step of the proof. Choose $\phi\in\mathcal A_{\Phi_0}$.
Without loss of generality assume that
$\mathbb E_{\mathbb P}\left[e^{-\alpha V^{\Phi_0,\phi}_T}\right]<\infty$ and so,
similar arguments as in the proof of Lemma \ref{lem3.1} yield
$\mathbb E_{\hat{\mathbb Q}}\left[\int_{0}^T\phi^2_t dt \right]<\infty.$
Let us argue that for any $\gamma>0$
\begin{align}\label{3.gam}
&\mathbb E_{\mathbb P}\left[e^{-\alpha V^{\Phi_0,\phi}_T}\right]\geq\nonumber\\
 &\alpha\gamma\mathbb E_{\hat {\mathbb Q}}\left[\Phi_0 S_0+\int_{0}^T S_t\phi_t dt+\frac{\Lambda}{2}\int_{0}^T \phi^2_t dt\right]
-\mathbb E_{\mathbb P}\left[\gamma \hat Z\left(\log(\gamma \hat Z)-1\right)
\right]=\nonumber\\
&\alpha\gamma\mathbb E_{\hat {\mathbb Q}}\left[\Phi_0 (S_0-\hat M_0)+\int_{0}^T (S_t-\hat M_t)\phi_t  dt+\frac{\Lambda}{2}\int_{0}^T \phi^2_t dt\right]-\gamma(\log \gamma-1)-\gamma \mathbb E_{\hat {\mathbb Q}}\left[\log \hat Z\right]\nonumber\\
&\geq\alpha \gamma\mathbb E_{\hat{\mathbb Q}}\left[ \Phi_0(S_0-\hat M_0)-\frac{1}{2\Lambda}\int_{0}^T|\hat M_t-S_t|^2 dt\right]-\gamma(\log \gamma-1)-\gamma\mathbb E_{\hat{\mathbb Q}}\left[\log \hat Z\right].
\end{align}
Indeed the first inequality follows from the simple inequality $e^x\geq xy-y(\log y-1)$, $x\in\mathbb R$, $y>0$.
The equality is due to
$\mathbb E_{\hat {\hat{\mathbb Q}}}\left[\Phi_0 \hat M_0+\int_{0}^T \hat M_t \phi_t dt\right]=0$ (for this we need the bound
$\mathbb E_{\hat{\mathbb Q}}\left[\int_{0}^T\phi^2_t dt \right]<\infty$).
The last inequality
 follows from the maximization of the quadratic pattern in $\phi$.

Optimizing (\ref{3.gam}) in $\gamma>0$ we arrive at
\begin{equation}\label{final2}
\mathbb E_{\mathbb P}\left[e^{-\alpha V^{\Phi_0,\phi}_T}\right]\geq e^{-\alpha D}.
\end{equation}
Since $\phi\in\mathcal A_{\Phi_0}$ was arbitrary, from (\ref{final1}), (\ref{final2}) and the fact that
$(\hat{\mathbb Q}\in\mathcal Q,\hat M\in\mathcal M^{\hat{\mathbb Q}}_{[0,T]})$
is the minimizer from Lemma \ref{lem3.2} we obtain
(\ref{3.1}). Moreover, let us notice that
there is an equality in (\ref{3.gam}) if and only if
$\phi=\frac{\hat M-S}{\Lambda}$ $dt\otimes\mathbb P$
a.s. This yields (\ref{3.2})
and completes the proof.
\end{proof}

\section{Explicit Computations in the Bachelier Model}\label{sec:3}
In this section we assume that the
probability space $(\Omega, \mathcal{F}, \mathbb P)$
carrying a one-dimensional Wiener process
$W=(W_t)_{t \in [0,T]}$
and the
filtration
$(\mathcal{F}_t)_{t \in [0,T]}$
is the natural augmented filtration generated by $W$.
The risky
asset $S$ is given by
\begin{equation}\label{2.vas}
S_t=S_0+\sigma W_t+\mu t, \ \ t\in [0,T]
\end{equation}
where $S_0\in\mathbb R$ is the initial asset price, $\sigma>0$ is the constant volatility and $\mu\in\mathbb R$ is the constant
drift.

Consider a European contingent claim with the quadratic payoff
$\mathcal X=\kappa S^2_T$ where $\kappa\in \left(0,\frac{1}{2\alpha\sigma^2  T}\right)$ is a constant.
We say that $\hat\phi\in\mathcal A_{\Phi_0}$ is
 a utility-based optimal hedging strategy if
 \begin{equation*}
 \mathbb E_{\mathbb P}\left[e^{\alpha\left(\mathcal X-V^{\Phi_0,\hat\phi}_T\right)}\right]=\inf_{\phi\in\mathcal A_{\Phi_0}}
 \mathbb E_{\mathbb P}\left[e^{\alpha\left(\mathcal X-V^{\Phi_0,\phi}_T\right)}\right].
 \end{equation*}
 \begin{thm}\label{thm3.1}
Let $\rho:=\frac{\alpha\sigma^2}{\Lambda}$
be the risk-liquidity ratio.
The utility-based optimal hedging strategy $\hat \phi_t$, $t\in [0,T)$ is unique and given by
the feedback form
\begin{equation}\label{port}
\hat\phi_t=\frac{\left(2\kappa S_t+\frac{\mu}{\alpha\sigma^2}\right)\tanh\left(\sqrt\rho(T-t)/2\right)-\left(\coth\left(\sqrt\rho (T-t)\right)-2\Lambda\sqrt{\rho}\kappa\right)\hat\Phi_t}{\frac{1}{\sqrt\rho}-4\kappa\Lambda \tanh\left(\sqrt\rho(T-t)/2\right)}
\end{equation}
where $\hat\Phi_t:=\Phi_0+\int_{0}^t\hat\phi_s ds$, $t\in [0,T].$
\end{thm}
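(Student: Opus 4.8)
The plan is to reduce the utility-based hedging problem to the pure liquidation problem (\ref{2.2}) and then invoke Theorem \ref{thm2.1}. Since $S_T$ is Gaussian with variance $\sigma^2 T$ under $\mathbb P$, the constraint $\kappa<\frac{1}{2\alpha\sigma^2 T}$ is exactly what makes $c:=\mathbb E_{\mathbb P}[e^{\alpha\kappa S^2_T}]$ finite. I would define the tilted measure $\tilde{\mathbb P}$ by $\frac{d\tilde{\mathbb P}}{d\mathbb P}:=e^{\alpha\kappa S^2_T}/c$; the standard exponential change of measure gives $\mathbb E_{\mathbb P}[e^{\alpha(\mathcal X-V^{\Phi_0,\phi}_T)}]=c\,\mathbb E_{\tilde{\mathbb P}}[e^{-\alpha V^{\Phi_0,\phi}_T}]$, so minimizing the hedging functional over $\phi$ is equivalent to the problem (\ref{2.2}) under $\tilde{\mathbb P}$. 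One first checks that Assumption \ref{asm3.1} persists for $\tilde{\mathbb P}$ (again by $\kappa<\frac{1}{2\alpha\sigma^2 T}$, with room left for the running supremum via Gaussian tail/reflection estimates), so Theorem \ref{thm2.1} applies under $\tilde{\mathbb P}$ and produces a unique optimizer $\hat\phi=(\hat M-S)/\Lambda$, where $(\hat{\mathbb Q},\hat M)$ minimizes the corresponding dual.

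Next I would make the dual functional explicit. Expanding $\log\frac{d\mathbb Q}{d\tilde{\mathbb P}}=\log\frac{d\mathbb Q}{d\mathbb P}-\alpha\kappa S^2_T+\log c$ rewrites the dual, up to an additive constant, as the minimization over $\mathbb Q\in\mathcal Q$ and $\mathbb Q$-martingales $M$ of $\mathbb E_{\mathbb Q}[\frac{1}{\alpha}\log\frac{d\mathbb Q}{d\mathbb P}-\kappa S^2_T+\Phi_0(M_0-S_0)+\frac{1}{2\Lambda}\int_0^T(M_t-S_t)^2\,dt]$, now relative to $\mathbb P$, under which $W$ is a Brownian motion. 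I would then parametrize the dual objects by their Girsanov characteristics: writing the $\mathbb P$-density process as $dZ_t=Z_t\beta_t\,dW_t$ makes the relative entropy equal $\frac{1}{2\alpha}\mathbb E_{\mathbb Q}[\int_0^T\beta^2_t\,dt]$ and turns $W^{\mathbb Q}_t=W_t-\int_0^t\beta_s\,ds$ into a $\mathbb Q$-Brownian motion, while $M_t=M_0+\int_0^t\nu_s\,dW^{\mathbb Q}_s$. Substituting $S_t=S_0+\mu t+\sigma\int_0^t\beta_s\,ds+\sigma W^{\mathbb Q}_t$ and applying the Itô isometry, the terminal term $-\kappa\mathbb E_{\mathbb Q}[S^2_T]$ and the tracking term split into squared conditional-mean contributions plus deterministic time-integrals of $\beta^2_t$ and $(\nu_t-\sigma)^2$; this is the ``convenient representation'' of the dual target.

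The core step is to collapse this stochastic control problem to a deterministic variational problem. Lemma \ref{lem3.2} already guarantees a unique dual minimizer and the functional is strictly convex in the density coordinates; exploiting this, and the fact that the minimizer must inherit the affine Gaussian structure of the data, I would argue that the optimal Girsanov kernels are affine in the state $(S_t,\Phi_t)$ with deterministic coefficients, so that $\hat M_t-S_t$ is affine in $(S_t,\hat\Phi_t)$. The rigorous route is verification: guess the affine-form candidate, check that it satisfies the first-order variational condition characterizing the convex dual minimizer (exactly as in the $h'(0+)\ge 0$ computation in the proof of Theorem \ref{thm2.1}), and conclude by uniqueness from Lemma \ref{lem3.2}. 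Plugging this ansatz into the representation above reduces the dual to one (or two coupled) deterministic quadratic variational problems in the time-dependent coefficients, which is precisely the object treated by the auxiliary result of Section \ref{sec:4}; its Euler--Lagrange equations form a scalar Riccati ODE whose solution produces $\tanh(\sqrt\rho(T-t)/2)$ and $\coth(\sqrt\rho(T-t))$, with the liquidation requirement $\Phi_T=0$ supplying the singular terminal (blow-up) condition responsible for the $\coth$ behaviour at $t=T$.

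Finally, I would read off $\hat M_t$ from the optimal coefficients, form $\hat\phi_t=(\hat M_t-S_t)/\Lambda$, and rearrange into the stated feedback form (\ref{port}); the denominator $\frac{1}{\sqrt\rho}-4\kappa\Lambda\tanh(\sqrt\rho(T-t)/2)$ is positive on $[0,T)$ precisely because $\kappa<\frac{1}{2\alpha\sigma^2 T}$ (using $\tanh x<x$), which secures admissibility $\hat\phi\in\mathcal A_{\Phi_0}$, while uniqueness is inherited from Theorem \ref{thm2.1}. I expect the main obstacle to be the reduction of the third step: rigorously eliminating randomness in the optimal kernels through convexity/uniqueness and identifying the correct coupled ODE system, and then (inside Section \ref{sec:4}) solving that system and matching the singular terminal condition imposed by liquidation.
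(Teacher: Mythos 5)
Your first two steps coincide with the paper's: the tilting $\frac{d\tilde{\mathbb P}}{d\mathbb P}=e^{\alpha\mathcal X}/\mathbb E_{\mathbb P}\left[e^{\alpha\mathcal X}\right]$ (the paper checks Assumption \ref{asm3.1} for $\tilde{\mathbb P}$ simply by H\"{o}lder, no tail estimates needed), Theorem \ref{thm2.1} applied under $\tilde{\mathbb P}$, and the Girsanov/martingale-representation parametrization of the dual pair $(\mathbb Q,M)$. But your third step --- where the whole difficulty sits --- is only a plan, and it misses the device that makes the paper's proof work. After writing $W^{\mathbb Q}_t=W_t-\int_0^t\theta_s ds$, the paper applies the martingale representation theorem a \emph{second} time, to the Girsanov kernel itself: $\theta_t=a_t+\int_0^t\beta_{t,s}dW^{\mathbb Q}_s$ with $a$ deterministic. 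Setting $\nu_t=\mu t+\sigma\int_0^t a_s ds$ and $l_{t,s}=\int_s^t\beta_{u,s}du$, the It\^{o} isometry then splits the dual objective \emph{exactly}, with no ansatz, into a deterministic functional $I(\nu)$ plus $\mathbb E_{\mathbb Q}\left[\int_0^T J_s\,ds\right]$, where each $J_s$ involves only the kernel $l_{\cdot,s}$ and can be optimized pathwise; in particular the optimal $M_0$ equals $S_0+\frac{1}{T}\int_0^T\nu_t dt-\frac{\Phi_0\Lambda}{T}$, a function of $\nu$ alone. The quadratic payoff is precisely what makes this split exact. Since existence and uniqueness of the dual minimizer are already supplied by Lemma \ref{lem3.2}, nothing remains to ``verify'': Proposition \ref{lem4.0} identifies the optimal $\nu$, hence $\hat M_0$, and the Markov property of Brownian motion upgrades the resulting $t=0$ formula to the feedback form (\ref{port}) for all $t$ --- a reduction also absent from your plan.

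By contrast, your proposed core argument --- guess that the optimal kernels are affine in $(S_t,\hat\Phi_t)$, check the first-order condition as in the $h'(0+)\geq 0$ computation, conclude by uniqueness --- defers exactly the hard work: you produce no explicit candidate $(\hat{\mathbb Q},\hat M)$, no coupled ODE system, no verification of the variational inequality, and no admissibility check (finite entropy of the candidate density, $\hat M\in\mathcal M^{\hat{\mathbb Q}}_{[0,T)}$); you yourself flag this as the main obstacle, so as written the proposal is a program rather than a proof. It also anticipates a structure different from what actually occurs: the paper's Euler--Lagrange equation is the \emph{linear} ODE $\ddot{\delta}_t-\rho\delta_t\equiv const$ with the constraints entering through $\delta_T$ and $\int_0^T\delta_t dt$ (the liquidation requirement appears via the term $\frac{1}{T}\left(\Phi_0\Lambda-\int_0^T\nu_t dt\right)^2$ and a Lagrange multiplier), not a Riccati equation with a singular terminal condition; the blow-up of $\coth\left(\sqrt\rho(T-t)\right)$ at $t=T$ emerges only a posteriori when rearranging into feedback form. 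A verification argument in the spirit you sketch could plausibly be completed (it resembles the HJB route of \cite{EN:2022}), but deriving and solving the requisite system is precisely the missing step.
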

Our feedback description (\ref{port}) can be interpreted as follows.
From the simple inequality $\tanh (z)<z$, $\forall z>0$
and the assumption $\kappa\in \left(0,\frac{1}{2\alpha\sigma^2 T}\right)$ it follows that the
denominator in (\ref{port}) is positive and
$\coth\left(\sqrt\rho (T-t)\right)-2\Lambda\sqrt{\rho}\kappa>0$. Thus, the optimal trading strategy is a mean reverting strategy towards
the process $\frac{\left(2\kappa S_t+\frac{\mu}{\alpha\sigma^2}\right)\tanh\left(\sqrt\rho(T-t)/2\right)}
{\coth\left(\sqrt\rho (T-t)\right)-2\Lambda\sqrt{\rho}\kappa}$, $t\in [0,T]$. This
process can be viewed as a tradeoff between the optimal trading strategy
in the frictionless case
$2\kappa S_t+\frac{\mu}{\alpha\sigma^2}$, $t\in [0,T]$
and the liquidation requirement.

Next, we prove \textbf{Theorem \ref{thm3.1}.}
\begin{proof}
First, from the assumption $\kappa\in \left(0,\frac{1}{2\alpha\sigma^2 T}\right)$ it follows that
$\mathbb E_{\mathbb P}\left[e^{\alpha\mathcal X}\right]<\infty$. Thus,
define the probability measure $\tilde{\mathbb P}$ by
$\frac{d\tilde{\mathbb P}}{d\mathbb P}:=\frac{e^{\alpha \mathcal X}}{\mathbb E_{\mathbb P}\left[e^{\alpha\mathcal X}\right]}.$
Observe that
for any probability measure
$\mathbb Q\sim\mathbb P$ we have
$\mathbb E_{\mathbb Q}\left[\log\left(\frac{d\mathbb Q}{d\tilde{\mathbb P}}\right)\right]=
\mathbb E_{\mathbb Q}\left[\log\left(\frac{d\mathbb Q}{d\mathbb P}\right)-\alpha \mathcal X\right]+
\alpha \log\left(\mathbb E_{\mathbb P}\left[e^{\alpha\mathcal X}\right]\right)$.
From the H\"{o}lder's inequality and Assumption \ref{asm3.1} it follows that there exists $b>0$ such that
$\mathbb E_{\tilde{\mathbb P}}\left[\exp\left(b\sup_{0 \leq t\leq T} S^2_t\right)\right]<\infty.$
Hence,
by applying Theorem \ref{thm2.1} for the probability measure $\tilde{\mathbb P}$
we obtain
\begin{align}\label{3.21}
&\min_{\phi\in\mathcal A_{\Phi_0}}\left\{\frac{1}{\alpha}\log\mathbb{E}_{\mathbb P}\left[\exp\left(\alpha\left(\mathcal X- V^{\Phi_0,\phi}_T\right)\right) \right]\right\}=\\
&\sup_{\mathbb Q\in\mathcal Q}\sup_{M\in\mathcal M^{\mathbb Q}_{[0,T)}}\mathbb E_{\mathbb Q}\left[\mathcal X-\frac{1}{\alpha}\log\left(\frac{d\mathbb Q}{d\mathbb P}\right)-\Phi_0(M_0-S_0)-\frac{1}{2\Lambda}\int_{0}^T |M_t-S_t|^2 dt\right].\nonumber
\end{align}
Moreover, there exists
a unique maximizer
$\left(\hat{\mathbb Q}\in\mathcal Q,\hat M\in \mathcal M^{\hat {\mathbb Q}}_{[0,T)}\right)$
 for the right-hand side of (\ref{3.21})
 and the process
given by (\ref{3.2})
is the unique utility-based optimal hedging strategy.

Observe that by the Markov property of Brownian motion, in order to prove
Theorem \ref{thm3.1} it is sufficient to establish (\ref{port}) for $t=0$.
Thus, in view of (\ref{3.2}) it remains to establish that
\begin{equation}\label{end}
\frac{\hat M_0-S_0}{\Lambda}=\frac{\left(2\kappa S_0+\frac{\mu}{\alpha\sigma^2}\right)\tanh\left(\sqrt\rho T/2\right)-\left(\coth\left(\sqrt\rho T\right)-2\Lambda\sqrt{\rho}\kappa\right)\Phi_0}{\frac{1}{\sqrt\rho}-4\kappa\Lambda \tanh\left(\sqrt\rho T/2\right)}.
\end{equation}

To this end
let $\mathbb Q\in\mathcal Q$ and
$M\in\mathcal M^{\mathbb Q}_{[0,T)}$.
From the Girsanov theorem it follows that there exists a progressively measurable process
$\theta\in L^2(dt\otimes\mathbb Q)$ such that
$\mathbb E_{\mathbb Q}[\log(d\mathbb Q/d\mathbb P)]=\mathbb
  E_{\mathbb Q}[\int_0^T\theta^2_sds]/2<\infty$ and
  $W^{\mathbb Q}_t:=W_t-\int_{0}^t\theta_s ds$, $t\in [0,T]$ is a $\mathbb Q$-Brownian motion.
  By applying the martingale representation theorem there exists
  a process $\gamma=(\gamma_t)_{0\leq t< T}$ such that
  \begin{equation}\label{3.21+}
  M_t=M_0+\sigma\int_{0}^t \gamma_s dW^{\mathbb Q}_s, \ \ dt\otimes\mathbb P \ \ \mbox{a.s.}
  \end{equation}
  Moreover, by applying the martingale representation theorem for $\theta_t$, $t\in [0,T]$ we conclude
  that there exist
  a deterministic function $a_t$, $t\in [0,T]$ and a jointly measurable process $\beta_{t,s}$, $0\leq s\leq t\leq T$ such that
  $\beta_{t,s}$ is $\mathcal F_{t\wedge s}$ measurable and
  \begin{equation}\label{3.22}
  \theta_t=a_t+\int_{0}^t \beta_{t,s}dW^{\mathbb Q}_s, \ \ dt\otimes\mathbb P \ \ \mbox{a.s.}
  \end{equation}
  Set
  $\nu_t:=\mu t+\sigma\int_{0}^t a_s ds$, $t\in [0,T]$ and $l_{t,s}:=\int_{s}^t \beta_{u,s}du$, $0\leq s\leq t\leq T$.
  From the Fubini theorem, (\ref{2.vas}) and (\ref{3.22})
  \begin{equation}\label{3.23}
  S_t=S_0+\nu_t+\sigma \int_{0}^t \left(1+l_{t,s}\right)dW^{\mathbb Q}_s, \ \ t\in [0,T].
  \end{equation}
  Given the probability measure $\mathbb Q$ we are looking for a martingale $\tilde M\in\mathcal M^{\mathbb Q}_{[0,T)}$
  which maximizes the right-hand side of (\ref{3.21}).
  By combining (\ref{3.21+}), (\ref{3.23}) and applying the It\^{o} Isometry and the Fubini theorem we obtain
\begin{align*}
&\mathbb E_{{\mathbb Q}}\left[\Phi_0( M_0-S_0)+\frac{1}{2\Lambda}\int_{0}^T
|{M}_t-S_t|^2 dt\right]\\
&=\Phi_0(M_0-S_0)+\frac{1}{2\Lambda}\int_{0}^T \left(M_0-S_0-\nu_t \right)^2 dt\\
&+\frac{\sigma^2}{2\Lambda}\mathbb E_{\mathbb Q}\left[\int_{0}^T \int_{s}^T \left(\gamma_s-1- l_{t,s}\right)^2 dt ds\right].
\end{align*}
Given $a$ and $\beta$, we are looking for
$\hat M_0$ and $\hat\gamma$ which minimize the above right-hand side. Observe that the right-hand side is
a quadratic function   in $M_0$ and $\gamma_s$, $s\in [0,T)$. Hence, we obtain that the minimizer is unique and given by
\begin{equation}\label{3.24}
\hat M_0=S_0+\frac{1}{T}\int_{0}^T \nu_t dt-\frac{\Phi_0\Lambda}{T},
\end{equation}
and
\begin{equation}\label{3.25}
\hat\gamma_s=1+\frac{1}{T-s}\int_{s}^T l_{t,s} dt, \ \ s<T.
\end{equation}
Finally, we compute the optimal $\nu$. From the It\^{o} Isometry, the Fubini theorem and
(\ref{3.22})--(\ref{3.23})
we have
$$\mathbb E_{\mathbb Q}[S^2_T]=\left(S_0+\nu_T\right)^2
+\sigma^2 \mathbb E_{\mathbb Q}\left[\int_{0}^T\int_{s}^T \left(1+l_{t,s}\right)^2  dt ds\right]$$
and
$$\mathbb E_{\mathbb Q}\left[\log\left(\frac{d\mathbb Q}{d\mathbb P}\right)\right]=\frac{1}{2}\mathbb
  E_{\mathbb Q}\left[\int_0^T\theta^2_sds\right]=\frac{1}{2}\left(\int_{0}^Ta^2_t dt+\mathbb E_{\mathbb Q}\left[\int_{0}^{T}\int_{s}^T \beta^2_{t,s} dt ds\right]\right).
$$
These equalities together with (\ref{3.24})--(\ref{3.25}) give
\begin{align}\label{3.26}
&\sup_{M\in\mathcal M^{\mathbb Q}_{[0,T)}}\mathbb E_{\mathbb Q}\left[\mathcal X-\frac{1}{\alpha}\log\left(\frac{d\mathbb Q}{d\mathbb P}\right)-
\Phi_0( M_0-S_0)-\frac{1}{2\Lambda}\int_{0}^T | M_t-S_t|^2 dt\right]\nonumber\\
&=\mathbb E_{\mathbb Q}\left[\mathcal X-\frac{1}{\alpha}\log\left(\frac{d\mathbb Q}{d\mathbb P}\right)-
\Phi_0( \hat M_0-S_0)-\frac{1}{2\Lambda}\int_{0}^T |\hat M_t-S_t|^2 dt\right]\nonumber\\
&=I+\mathbb E_{\mathbb Q} \left[\int_{0}^T J_s ds\right]
\end{align}
where
$$
I=\kappa\left(S_0+\nu_T\right)^2-\frac{1}{2\alpha}\int_{0}^Ta^2_t dt
+\frac{1}{2\Lambda}\left(\frac{1}{T}\left(\Phi_0\Lambda-\int_{0}^T\nu_t dt\right)^2-
\int_{0}^T \nu^2_t dt\right)
$$
and
$$J_s=\kappa\sigma^2\int_{s}^T \left(1+l_{t,s}\right)^2 dt-\frac{1}{2\alpha}\int_{s}^T\beta^2_{t,s} dt+\frac{1}{2\Lambda}
\left(\frac{1}{T-s}\left(\int_{s}^T l_{t,s} dt\right)^2-
\int_{s}^T l^2_{t,s} dt\right).
$$
 From Proposition \ref{lem4.0} we conclude that the optimal $\nu$ satisfies
 (\ref{4.-2}). Hence,
from (\ref{3.24}) we obtain
(\ref{end}) and complete the proof.
\end{proof}
\begin{remark}\label{rem3.1}
By applying (\ref{3.26})
we can also compute the
the right-hand side of (\ref{3.21}). This requires computing the maximal $I$ and for any
$s\in [0,T]$ computing the maximal $J_s$.
Observe that the latter is a deterministic variational problem where the control is
$l_{\cdot,s}$, $\cdot\in [s,T]$. Computing both $I$ and $J_s$, $s\in [0,T]$
can be done by computing the value which corresponds to the optimization problem given by
(\ref{4.-1}) (for $J_s$ replace $T$ with $T-s$, $S_0,\sigma$ with $1$ and $\mu,\Phi_0$ with $0$). The computations are quite cumbersome,
hence omitted.
\end{remark}
\begin{remark}
Let us notice that
the quadratic structure of the payoff $\mathcal
X$ which is used in
(\ref{3.26}) is essential in reducing the dual
problem to a deterministic control problem.
This is due to the It\^{o} Isometry.
Although, for a general payoff the dual representation does not allow us to obtain an explicit solution,
it can still be applied for utility-based hedging problems.
For instance, the recent paper \cite{DD:22} applies Theorem \ref{thm2.1}
and computes for a general European contingent claim in the Bachelier model,
the scaling limit of the corresponding utility indifference prices for a vanishing
price impact which is inversely proportional to the risk aversion.
\end{remark}

\section{Auxiliary Result }\label{sec:4}
The following result deals with a purely deterministic setup.
\begin{prop}\label{lem4.0}
Let $\Gamma$ be the space of all continuous functions $\delta:[0,T]\rightarrow \mathbb R$
which are differentiable almost everywhere (with respect to the Lebesgue measure)
and satisfy
$\delta(0)=0$. Then, the maximizer $\hat\delta\in\Gamma$ of the optimization problem
\begin{equation}\label{4.-1}
\max_{\delta\in\Gamma}\left\{\kappa\left(S_0+\delta_T\right)^2-\frac{1}{2\alpha\sigma^2}\int_{0}^T(\dot{\delta}_t-\mu)^2 dt
+\frac{1}{2\Lambda}\left(\frac{1}{T}\left(\Phi_0\Lambda-\int_{0}^T\delta_t dt\right)^2-
\int_{0}^T \delta^2_t dt\right)\right\}
\end{equation}
is unique and satisfies
\begin{equation}\label{4.-2}
\frac{1}{T}\int_{0}^T\hat\delta_t dt=\frac{\left(2\kappa S_0+\frac{\mu}{\alpha\sigma^2}\right)\tanh(\sqrt{\rho} T/2)-
\left(\coth(\sqrt\rho T)-2\Lambda\sqrt\rho\kappa\right)\Phi_0}{\frac{1}{\sqrt{\rho}\Lambda}-4 \kappa \tanh(\sqrt{\rho} T/2)}+\frac{\Phi_0\Lambda}{T}.
\end{equation}
\end{prop}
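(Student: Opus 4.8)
The plan is to read (\ref{4.-1}) as the maximization of a quadratic functional $F$ over the Hilbert space $H:=\{\delta\in\Gamma:\dot\delta\in L^2(0,T)\}$ of functions vanishing at $0$ with square-integrable derivative (any $\delta\in\Gamma$ giving a finite value of the functional automatically lies in $H$). First I would isolate the homogeneous quadratic part
\[
Q(\delta)=\kappa\,\delta_T^2-\frac{1}{2\alpha\sigma^2}\int_{0}^T\dot\delta_t^2\,dt+\frac{1}{2\Lambda T}\left(\int_{0}^T\delta_t\,dt\right)^2-\frac{1}{2\Lambda}\int_{0}^T\delta_t^2\,dt
\]
and show it is negative definite. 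Writing $\delta_T=\int_0^T\dot\delta_t\,dt$ and applying Cauchy--Schwarz gives $\kappa\,\delta_T^2\le \kappa T\int_0^T\dot\delta_t^2\,dt$, so the first two terms are bounded by $\bigl(\kappa T-\tfrac{1}{2\alpha\sigma^2}\bigr)\int_0^T\dot\delta_t^2\,dt$, whose coefficient is strictly negative precisely because of the standing assumption $\kappa<\tfrac{1}{2\alpha\sigma^2 T}$; the last two terms are $\le 0$ by Jensen's inequality $\bigl(\int_0^T\delta\bigr)^2\le T\int_0^T\delta^2$. Using the Poincar\'e inequality (valid since $\delta_0=0$) this yields $Q(\delta)\le -c_0\|\dot\delta\|_{L^2}^2$ for some $c_0>0$, so $F$ is strictly concave (hence has at most one maximizer) and coercive (hence the direct method, together with the weak upper semicontinuity of the concave $F$, furnishes a maximizer $\hat\delta$).

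Next I would derive the Euler--Lagrange characterization. Perturbing $\hat\delta\mapsto\hat\delta+\epsilon\eta$ with $\eta\in\Gamma$ arbitrary, setting $\frac{d}{d\epsilon}\big|_{0}F=0$, and integrating the $\int\dot\delta\,\dot\eta$ term by parts (using $\eta_0=0$), the stationarity condition splits — because $\eta_T$ is free — into an interior linear equation plus a natural boundary condition at $T$:
\[
\ddot{\hat\delta}_t=\rho\,\hat\delta_t+\frac{\rho C}{T},\qquad \frac{1}{\alpha\sigma^2}\bigl(\dot{\hat\delta}_T-\mu\bigr)=2\kappa\bigl(S_0+\hat\delta_T\bigr),\qquad \hat\delta_0=0,
\]
where $C:=\Phi_0\Lambda-\int_0^T\hat\delta_s\,ds$ is the (a priori unknown) constant coupling the local equation to the nonlocal average.

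The general solution is $\hat\delta_t=A\cosh(\sqrt\rho\, t)+B\sinh(\sqrt\rho\, t)-C/T$. Imposing $\hat\delta_0=0$ gives $A=C/T$, and substituting the solution back into the definition of $C$ yields the self-consistency relation $\tfrac{1}{\sqrt\rho}\bigl(A\sinh(\sqrt\rho\,T)+B(\cosh(\sqrt\rho\,T)-1)\bigr)=\Phi_0\Lambda$. Together with the transversality condition (rewritten via $\alpha\sigma^2=\rho\Lambda$) this is a $2\times 2$ linear system for $(A,B)$, which I would solve by Cramer's rule. Using $\cosh^2-\sinh^2=1$ the determinant collapses to $\sinh(\sqrt\rho\,T)\bigl(\tfrac{1}{\sqrt\rho\Lambda}-4\kappa\tanh(\sqrt\rho\,T/2)\bigr)$ after invoking the identity $\tfrac{\cosh z-1}{\sinh z}=\tanh(z/2)$; dividing numerator and denominator by $\sinh(\sqrt\rho\,T)$ (and using $\tfrac{\cosh(\sqrt\rho\,T)}{\sinh(\sqrt\rho\,T)}=\coth(\sqrt\rho\,T)$) produces exactly the claimed closed form for $-A$. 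Since the definition of $C$ gives $\tfrac{1}{T}\int_0^T\hat\delta_t\,dt=\tfrac{\Phi_0\Lambda}{T}-A$, this is precisely (\ref{4.-2}).

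The main obstacle is this last step: performing the linear-algebra reduction cleanly and matching it to the stated expression hinges on the correct hyperbolic identities, and one must separately verify that the determinant does not vanish so that the system is uniquely solvable. The latter amounts to the positivity of $\tfrac{1}{\sqrt\rho\Lambda}-4\kappa\tanh(\sqrt\rho\,T/2)$, which follows from $\tanh z<z$ together with $2\kappa\alpha\sigma^2 T<1$ — the same inequality already observed after Theorem \ref{thm3.1} — and which is consistent with the strict concavity established in the first step.
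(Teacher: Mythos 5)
Your proof is correct, but it takes a genuinely different route from the paper. The paper proceeds in two stages: it first freezes the two ``global'' quantities $x=\delta_T$ and $y=\int_0^T\delta_t\,dt$, solves the constrained problem $\min_{\delta\in\Gamma_{x,y}}\int_0^T H(\dot\delta_t,\delta_t)\,dt$ via an Euler--Lagrange equation with a Lagrange multiplier for the integral constraint (same hyperbolic ansatz, written as $c_1\sinh(\sqrt\rho t)+c_2\sinh(\sqrt\rho(T-t))+c_3$), computes the optimal value in closed form as a function of $(x,y)$, and then maximizes the resulting two-dimensional quadratic form $Ax^2+By^2+2Cxy+\eta x+\theta y$, checking $B<0$ and $AB-C^2>0$ before reading off $\bar y/T$. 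You instead work directly on the unconstrained problem: strict concavity and coercivity of the functional (your Cauchy--Schwarz bound $\kappa\delta_T^2\le\kappa T\|\dot\delta\|_{L^2}^2$ together with $\kappa<\tfrac{1}{2\alpha\sigma^2T}$ and Jensen), existence by the direct method, then the full first variation with the \emph{natural} boundary condition $\tfrac{1}{\alpha\sigma^2}(\dot{\hat\delta}_T-\mu)=2\kappa(S_0+\hat\delta_T)$ at the free endpoint and the nonlocal constant $C=\Phi_0\Lambda-\int_0^T\hat\delta_s\,ds$, reduced to a $2\times2$ Cramer solve. I verified your algebra: the determinant does collapse to $\sinh(\sqrt\rho T)\bigl(\tfrac{1}{\sqrt\rho\Lambda}-4\kappa\tanh(\sqrt\rho T/2)\bigr)$, and $\tfrac1T\int_0^T\hat\delta_t\,dt=\tfrac{\Phi_0\Lambda}{T}-A$ reproduces (\ref{4.-2}) exactly; your determinant-positivity check plays precisely the role of the paper's condition $AB-C^2>0$, via the same inequality $\tanh z<z$. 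What each approach buys: yours gives a cleaner and self-contained existence proof (the paper's existence is only by explicit construction) and a transparent financial reading of the transversality condition, while the paper's two-stage reduction additionally produces the explicit value function in $(x,y)$ (its equation for $\min_{\delta\in\Gamma_{x,y}}\int H$), which is reused in Remark \ref{rem3.1} to compute the dual value --- information your route does not deliver without extra work.

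One imprecision worth fixing: your parenthetical claim that any $\delta\in\Gamma$ with finite objective value automatically lies in $H$ is false as stated --- a Cantor-type function has a.e.\ derivative zero, hence finite objective value, yet is not in the Sobolev space $H$, and along such singular directions the functional can even be unbounded above for admissible parameter choices. This looseness is inherited from the paper's own definition of $\Gamma$ (``differentiable almost everywhere''), and both your proof and the paper's implicitly require the correct reading, namely absolutely continuous $\delta$ with $\dot\delta\in L^2(0,T)$; you should state that restriction explicitly rather than derive it.
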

\begin{proof}
The proof will be done in two steps.
First we will solve the optimization problem (\ref{4.-1}) for the case where $\delta_T$ and
$\int_{0}^T\delta_t dt$ are given. Then, we will find the optimal $\delta_T$ and
$\int_{0}^T\delta_t dt$.

Thus, for any $x,y$ let $\Gamma_{x,y}\subset \Gamma$ be the set of all functions
$\delta\in\Gamma$ which satisfy $\delta_T=x$ and $\int_{0}^T\delta_t dt=y$.
Consider the minimization problem
$\min_{\delta\in \Gamma_{x,y}} \int_{0}^T H(\dot \delta_t,\delta_t)dt$
where $H(u,v):=\frac{1}{2\alpha\sigma^2}(u-\mu)^2+\frac{1}{2\Lambda} v^2$ for $u,v\in\mathbb R$.
This optimization problem  is convex and so it has a unique solution which has
to satisfy the Euler–Lagrange equation (for details see \cite{GF:63})
$\frac{d}{dt}\frac{\partial H}{\partial \dot \delta_t}=\lambda+\frac{d}{dt}\frac{\partial H}{\partial \delta_t}$
for some constant $\lambda>0$ (lagrange multiplier due to the constraint $\int_{0}^T \delta_t dt=y$).
Thus, the optimizer solves the ODE
$\ddot{\delta}_t-\rho \delta\equiv const$ (recall the risk-liquidity ration $\rho=\frac{\alpha\sigma^2}{\Lambda}$).
From the standard theory it follows that
\begin{equation}\label{sol}
\delta_t=c_1\sinh(\sqrt\rho t)+c_2\sinh(\sqrt\rho(T-t))+c_3
\end{equation}
for some constants $c_1,c_2,c_3$.
From the three constraints
$\delta_0=0$, $\delta_T=x$ and $\int_{0}^T \delta_t dt=y$ we obtain
\begin{equation}\label{sol1}
c_1=\frac{x-c_3}{\sinh(\sqrt\rho T)}, \ \ c_2=-\frac{c_3}{\sinh(\sqrt\rho T)} \ \
\mbox{and} \  \ c_3=\frac{\sqrt\rho y-x\tanh(\sqrt\rho T/2)}{\sqrt\rho T-2\tanh(\sqrt\rho T/2)}.
\end{equation}

We argue that
\begin{align}\label{4.1}
&\rho\int_{0}^T \delta^2_t dt+\int_{0}^T \dot{\delta}^2_tdt=\rho\int_{0}^T \left((\delta_t-c_3)+c_3\right)^2 dt+\int_{0}^T \dot{\delta}^2_tdt\nonumber\\
&=\frac{\sqrt\rho}{2}\left(c^2_1+c^2_2\right)\sinh\left(2\sqrt\rho T\right)-2c_1c_2\sqrt\rho\sinh(\sqrt\rho T)-\rho c^2_3 T+2\rho c_3 y\nonumber\\
&=\sqrt{\rho}x^2\coth(\sqrt\rho T)+2\sqrt\rho c_1 c_2\sinh(\sqrt\rho T)\left(\cosh(\sqrt\rho T)-1\right)-\rho c^2_3 T+2\rho c_3 y\nonumber\\
&=\sqrt{\rho}x^2\coth(\sqrt\rho T)+\left(2\sqrt\rho \tanh(\sqrt \rho T/2)-\rho T\right)c^2_3\nonumber\\
&+2\left(\rho y-\sqrt\rho \tanh(\sqrt \rho T/2)x\right)c_3\nonumber\\
&=\sqrt{\rho}\left(x^2\coth(\sqrt\rho T)+\frac{\left(x\tanh(\sqrt\rho T/2)-\sqrt\rho y\right)^2}{\sqrt\rho T-2\tanh(\sqrt\rho T/2)}\right).
\end{align}
Indeed, the first equality is obvious. The second equality follows from (\ref{sol}) and simple computations. The third equality is due
to $c_1-c_2=\frac{x}{\sinh(\sqrt\rho T)}$. The fourth equality is due to
$c_1c_2=\frac{c^2_3-xc_3}{\sinh^2(\sqrt\rho T)}$. The last equality follows from substituting $c_3$.

By combining (\ref{4.1}) and the simple equality
$$\frac{1}{2\alpha\sigma^2}\int_{0}^T (\dot \delta_t-\mu)^2 dt +\frac{1}{2\Lambda}
\int_{0}^T \dot \delta_t^2 dt=\frac{\mu^2 T}{2\alpha\sigma^2}-\frac{\mu x}{\alpha\sigma^2}+\frac{1}{2\alpha\sigma^2}\left(\rho\int_{0}^T \delta^2_t dt+\int_{0}^T \dot{\delta}^2_tdt\right)
$$
we obtain
\begin{align}\label{4.1+}
&\min_{\delta\in \Gamma_{x,y}} \int_{0}^T H(\dot \delta_t,\delta_t)dt=\nonumber\\
&\frac{\mu^2 T}{2\alpha\sigma^2}-\frac{\mu x}{\alpha\sigma^2}+\frac{1}{2\Lambda\sqrt\rho}\left(x^2\coth(\sqrt\rho T)+\frac{\left(x\tanh(\sqrt\rho T/2)-\sqrt\rho y\right)^2}{\sqrt\rho T-2\tanh(\sqrt\rho T/2)}\right).
\end{align}

We arrive at the final step of the proof. In view of (\ref{4.1+}), the optimization problem (\ref{4.-1})
is reduced to finding
$x:=\delta_T$ and $y:=\int_{0}^T\delta_t dt$ which maximize the quadratic form
$$A x^2+B y^2+2C xy + \eta x+\theta y$$
where
\begin{align*}
&A:=\kappa-\frac{1}{2\Lambda\sqrt\rho}\left(\coth(\sqrt\rho T)+\frac{\tanh^2(\sqrt\rho T/2)}{\sqrt\rho T-2 \tanh(\sqrt\rho T/2)}\right)\\
&=-\frac{\sqrt\rho T\coth(\sqrt\rho T)+4\Lambda\sqrt\rho\kappa \tanh(\sqrt\rho T/2)-1-2\Lambda\rho T\kappa}{{2\Lambda\sqrt\rho\left(\sqrt\rho T-2 \tanh(\sqrt\rho T/2)\right)}},
\\
&B:=\frac{1}{2\Lambda T}-\frac{\sqrt\rho}{2\Lambda\left(\sqrt\rho T-2 \tanh(\sqrt\rho T/2)\right)}=
-\frac{\tanh(\sqrt\rho T/2)}{\Lambda T \left(\sqrt\rho T-2 \tanh(\sqrt\rho T/2)\right)},\\
&C:=\frac{\tanh(\sqrt\rho T/2)}{2\Lambda\left(\sqrt\rho T-2 \tanh(\sqrt\rho T/2)\right)}, \ \ \eta:=2\kappa S_0+\frac{\mu}{\alpha\sigma^2} \ \ \mbox{and} \ \ \theta:=-\frac{\Phi_0}{T}.
\end{align*}
Simple computations give
\begin{align*}
&AB-C^2=-\frac{\kappa\tanh(\sqrt\rho T/2)}{\Lambda T\left(\sqrt\rho T-2 \tanh(\sqrt\rho T/2)\right)}\\
&+\frac{\coth(\sqrt\rho T)\tanh(\sqrt\rho T/2)}{2\sqrt\rho \Lambda^2 T \left(\sqrt\rho T-2 \tanh(\sqrt\rho T/2)\right)}-
\frac{\tanh^2(\sqrt\rho T/2)}{4\sqrt\rho \Lambda^2 T \left(\sqrt\rho T-2 \tanh(\sqrt\rho T/2)\right)}\\
&=\frac{\frac{1}{4\sqrt\rho\Lambda}-\kappa\tanh(\sqrt\rho T/2)}{\Lambda T\left(\sqrt\rho T-2 \tanh(\sqrt\rho T/2)\right)}.
\end{align*}
From the inequality $z>\tanh(z)z$, $z>0$ and the assumption
$\kappa\in \left(0,\frac{1}{2\alpha\sigma^2 T}\right)$ we obtain that $B<0$ and $AB-C^2>0$. Thus, the above quadratic form has a unique maximizer
$(\bar x,\bar y):=\frac{1}{2(AB-C^2)}\left(C\theta-B\eta,C\eta-A\theta\right).$
We conclude that the optimization problem
(\ref{4.-1}) has a unique solution which is given by (\ref{sol})--(\ref{sol1}) for
$(x,y):=(\bar x,\bar y)$.
Moreover, direct computations yield
$$\frac{\bar y}{T}=\frac{\left(2\kappa S_0+\frac{\mu}{\alpha\sigma^2}\right)\tanh(\sqrt{\rho} T/2)-
\left(\coth(\sqrt\rho T)-2\Lambda\sqrt\rho\kappa\right)\Phi_0}{\frac{1}{\sqrt{\rho}\Lambda}-4 \kappa \tanh(\sqrt{\rho} T/2)}+\frac{\Phi_0\Lambda}{T}$$
and (\ref{4.-2}) follows.
\end{proof}

\end{document}